\documentclass[dvips,a4paper]{article}

\usepackage{cite}

   \usepackage[dvips]{graphicx,color} 
  % declare the path(s) where your graphic files are
\graphicspath{%
    {Figures/pdf/}% inserted by PCTeX
    {converted_graphics/}% inserted by PCTeX
%    {./}% inserted by PCTeX
}

\DeclareGraphicsExtensions{.pdf}
\usepackage{graphicx}
\usepackage[cmex10]{amsmath}

\interdisplaylinepenalty=2500

\usepackage[caption=false,font=footnotesize]{subfig}

\hyphenation{op-tical net-works semi-conduc-tor}
\usepackage{amsfonts}
\usepackage[caption=false,font=footnotesize]{subfig}
\interdisplaylinepenalty=2500
\hyphenation{op-tical net-works semi-conduc-tor}

\newtheorem{proposition}{Proposition}
\newtheorem{proof}{Proof}

\numberwithin{equation}{section} 
\def\x{{\mathbf x}}
\def\y{{\mathbf y}}
\def\n{{\mathbf n}}
\def\w{{\mathbf w}}
\def\A{{\mathbf A}}
\def\Q{{\mathbf Q}}
\def\I{{\mathbf I}}
\def\J{{\mathbf J}}
\def\L{{\mathbf L}}
\def\D{{\mathbf D}}
\def\O{{\mathbf O}}
\def\1{{\mathbf1}}
\def\0{{\mathbf0}}
\def\X{{\mathbf X}}

\begin{document}

\title{ Compressed Sensing Performance Analysis via Replica Method using Bayesian framework}

\author{Solomon A. Tesfamicael   \and  Bruhtesfa E. Godana  \and  Faraz Barzideh }
%\author{
%% \IEEEauthorblockN{Solomon A. Tesfamicael}
%%\IEEEauthorblockA{
%%Dept.  of Electronics and Telecommunications\\
%%Norwegian University of Science and Technology\\
%%Email:  \href{mailto:tesfamic@iet.ntnu.no}{tesfamic@iet.ntnu.no}}
% \IEEEauthorblockN{Solomon A. Tesfamicael}
%\IEEEauthorblockA{
%Høgskolen i Sor-Trondelag, ALT \\
%Email:  \href{mailto:solomont@hist.no}{solomont@hist.no}}
% \and 
%\IEEEauthorblockN{Faraz Barzideh}
%\IEEEauthorblockA{
%Dept.  of Electronics and Telecommunications\\
%Norwegian University of Science and Technology\\
%Email:  \href{mailto:faraz.barzideh@gmail.com}{faraz.barzideh@gmail.com}}
%\and 
% \IEEEauthorblockN{Lars  Lundheim}
%\IEEEauthorblockA{
%Dept.  of Electronics and Telecommunications\\
%Norwegian University of Science and Technology\\
%Email:  \href{lundheim@iet.ntnu.no}{lundheim@iet.ntnu.no}}

%}
%\author{Solomon A. Tesfamicael \and   Faraz Barzideh  \and Bruhtesfa E. Godana }% <-this % stops a space
% 

%}
%\author{Solomon A. Tesfamicael \and Bruhtesfa E. Godana  \and  Faraz Barzideh \and  Lars  Lundheim }% <-this % stops a space
% 

%\thanks{}
\markboth{DRAFT- for internal use only}%
{Solomon A.Tesfamicael (tesfamic@iet.ntnu.no -\hskip 1mm Spring 2010}
\maketitle
\begin{abstract}
 
Compressive sensing (CS) is a new methodology to capture signals at lower rate than the Nyquist sampling rate when the signals are sparse or sparse in some domain. The performance of CS estimators is analyzed in this paper using tools from statistical mechanics, especially called replica method. This method has been used to analyze communication systems like Code Division Multiple Access (CDMA) and multiple input multiple output (MIMO) systems with large size. Replica analysis, now days rigorously proved, is an efficient tool to analyze large systems in general. Specifically, we analyze the performance of some of the estimators used in CS like LASSO (the Least Absolute Shrinkage and Selection Operator) estimator and Zero-Norm regularizing estimator as a special case of maximum a posteriori (MAP) estimator by using Bayesian framework to connect the CS estimators and replica method. We use both replica symmetric (RS) ansatz and one-step replica symmetry breaking (1RSB) ansatz, clamming the latter is efficient when the problem is not convex. This work is more analytical in its form. It is deferred for next step to focus on the numerical results.

\end{abstract}

\section{Introduction}\label{sec: Introduction}

Recently questions like, \textit{why go to so much effort to acquire all the data when most of what we get will be thrown away?};\textit{Can we not just directly measure the part that will not end up being thrown away?}, that were paused by Donoho\cite{CS} and others, triggered  a new way of sampling or sensing  called compact ("compressed") sensing (CS).

 In CS the task is to estimate or recover a sparse or compressible vector $\x^{0} \in \mathbb{R}^{N}$ from  a measurement vector $\y \in \mathbb{R}^{M}$. These are related through the linear transform  $ \y= \A\x^{0}$.
%\begin{equation}\label{noiseless CS problem}
%        \y= \A\x^{0}.
%   \end{equation}
Here, $\x^{0}$ is a sparse vector and $M\ll N$. In the seminal papers\cite{CS} -\cite{CT}, $\x^{0}$ is estimated from $\y$, by solving a convex optimization problem \cite{SBB},\cite{FNW}. Others have used greedy algorithms, like subspace pursuit (SP)\cite{Dai}, orthogonal matching pursuit (OMP) \cite{Tropp}  to solve the problem. In this paper the focus is rather on the convex optimization methods. And we consider the noisy measurement system and the linear relation becomes

 \begin{equation}\label{noisy CS problem}
        \y= \A\x^{0} + \sigma_{0}\w.
   \end{equation}
Here, $\y$ and $\x^{0}$ are as in above where as  the noise term, $\w \sim \mathcal{N}(0,\I)$. There exists a large body of work on how to efficiently obtain an estimate for $\x^{0}$. And the performances of such estimators are measured using metrics like Restricted Isometric Property (RIP) \cite{CT2}, Mutual Coherence (MC) \cite{DCS}, yet there is apparently no consensus on the bounds in using such metrics. The tool used in this paper gives performance bounds of large size CS systems \cite{GBS}.  \\

Generally the linear model $\eqref{noisy CS problem}$ is used to describe a multitude of linear systems like code division multiple access (CDMA) and multiple antenna systems like MIMO, to mention just a few. Tools from statistical mechanics have been employed to analyze large CDMA \cite{Tanaka}  and MIMO systems \cite{Ralf} \cite{RALF2}, and on in this paper  the same wisdom is applied to analyze the performance of estimators used in CS. Guo and et al in \cite{GBS} used a Bayesian framework for statistical inference with noisy measurements and characterize the posterior distribution of individual elements of the sparse signal by describing the mean mean square error(MSE) exactly. To do so, they consider $\eqref{noisy CS problem}$ in a large system  and applied the decoupling principle using tools from statistical mechanics.\\

One can find also works that have used the tools from statistical mechanics to analyse CS system performances. To mention some,  in \cite{GBS} as stated above, Guo and et al used the tools to describe the minimum mean square error  (MMSE) estimator, in  \cite{MAP} Rangan and others used the maximum a posterior(MAP) estimator of CS systems. These are referred as Replica MMSE claim and Replica MAP claim in \cite{MAP}.

In \cite{SBG} -\cite{Ari} authors have used Belief propagation and message passing algorithms for probabilistic reconstruction in CS using replica methods including RS. Especially, in \cite{Krz} one finds excellent work about phase diagrams in CS systems while \cite{Tul} generalizes replica analysis using free random matrices. Kabashima and et. al in \cite{KWT}, Ganguli and Sompolinsky in \cite{SS}  and Takeda and Kabashima \cite{TK} -\cite{THK} have shown statistical mechanical analysis of the CS by considering the noiseless recovery problem and they indicated that RSB analysis is needed in the phase regimes where the RS solution is not stable. In this paper  the performance of those CS estimators, considered as MAP estimator, is shown for the noisy problem by using the replica method including RS and RSB as in \cite{RGM} -\cite{BRAR}, where the RSB ansatz gives better solution when the replica symmetry (RS) solution is unstable. This work is kind of an extension of  \cite{BRAR} from MIMO systems to the CS systems.\\

% Also  Donoho and others in \cite{DMM} have used replica methods their elegant work as they use spatial coupling and approximate message passing algorithm to the CS problem. 

% %And to do that we follow what Takeda and others did in \cite{THK} to linear vector channel of which the measurement channel \eqref{noisy CS problem} we considered here is one of such channels. 
%Also  Donoho and others in \cite{DMM} showed that the noisy estimation does not worsen dramatically the precise tradeoff between under-sampling and sparsity compared to the noiseless case in relation to the phase transition. (Our work using RSB should agree with that.)

The paper is organized as follows.  In section \ref{sec:Problem setting } the estimator in CS system are presented and redefined using  the Bayesian framework, and based on that we present our basis of analysis in section \ref{sec: Analysis} which is the replica method from the statistical physics  and apply it on the different CS estimators which are presented generally as a MAP estimator.  In section \ref{sec: Particular Example} we showed our analysis using a paricular example, and section \ref{sec: Conclusion} presents conclusion and of future work.

%%%%%%%%%%%%%%%%%%%%%%%%%%%%%%%%%%%%%%%

\section{Bayesian framework for Sparse Estimation}\label{sec:Problem setting }

Beginning with a given vector of  measurements $\y \in \mathbb{R}^{M}$  and measurement matrix $\A \in \mathbb{R}^{M \times N}$,  assuming noisy  measurement with $\w  \in \mathbb{R}^{M} $  being i.i.d. Gaussian random variables with zero mean and covariance matrix  $\I $, estimating the sparse vector $\x^{0}  \in \mathbb{R}^{N}$ is the problem that we are considering  where these variables are related by the linear model \eqref{noisy CS problem}.

%\begin{equation}\label{noisy fMRI problem}
%        \y= \A \x^{0} + \w.
%   \end{equation}
%Here  $N \gg M$ and $N\gg k$, where $k$ is the number of non-zero entries in $\x^{0}$. Note that  \eqref{noisy fMRI problem}  has the sameshape as the GLM used in fMRI data analysis \cite{Friston2}.  \\

\subsection{Sparse Signal Estimation}\label{sec:CS}

Various methods for estimating $\x^{0}$ may be used. The classical approach to solving inverse problems of such type is by least squares (LS) estimator in which no prior information is used and its closed form is
\begin{equation}\label{LS}
         \hat{\x^{0}} = ( \A^{T} \A)^{-1}\A^{T}\y,
   \end{equation}
which performs very badly for the CS estimation problem we are considering since it does not find the sparse solution. Another approach to estimate  $\x^{0}$  is via the solution of the unconstrained optimization problem 

\begin{equation}\label{unconstrained optimization problem}
 \hat{\x^{0}} =\underset{ \x \in \mathbb{R}^{N}}{\operatorname{\mbox{min}}} \frac {1}{2} \parallel \y - \A \x^{0} \parallel_{2}^{2}   + u  f( \x^{0}), 
 \end{equation}
where $u  f( \x^{0} )$ is a regularizing term, for some non-negative $u$. By taking $f(\x^{0})=\parallel \x^{0} \parallel_{p}$, emphasis is made on a solution with LP norm, and  $\parallel \x^{0}\parallel_{p}$  is defined as a penalizing norm.  When $p=2$, we get

\begin{equation}\label{L2}
 \hat{\x^{0}} =\underset{ \x^{0} \in \mathbb{R}^{N}}{\operatorname{\mbox{min}}} \frac {1}{2} \parallel \y - \A \x^{0} \parallel_{2}^{2}   + u  \parallel \x^{0} \parallel_{2}.
 \end{equation}
This is penalizing the least square error by the  L2 norm  and this performs badly as well, since it does not introduce sparsity into the problem. When $p=0$, we get the L0 norm, which is defined as 

\begin{equation}\label{L0def}
  \nonumber  \| {\x^{0}} \|_{0} =k\equiv  \# \bigl \{ i \in \{1,2, \cdots ,N \} | x^{0}_{i}  \neq 0\bigr \} ,
 \end{equation}
the number of the non zero intries of  $\x^{0}$, which actually is a partial norm since it does not satisfy the triangle inequality property, but can be treated as norm by defining it as in \cite{MAP}, and get the L0 norm regularizing estimator 
\begin{equation}\label{L0}
 \hat{\x^{0}} =\underset{ \x^{0} \in \mathbb{R}^{N}}{\operatorname{\mbox{min}}} \frac {1}{2} \parallel \y - \A \x^{0} \parallel_{2}^{2}   + u  \parallel \x^{0} \parallel_{0},
 \end{equation}
which gives the best solution for the problem at hand since it favors sparsity in $\x^{0}$. Nonetheless, it is an NP- hard combinatorial problem. Instead, it has been a practice to approximate it using L1  penalizing norm to get the estimator 
\begin{equation}\label{L1}
 \hat{\x^{0}} =\underset{ \x^{0} \in \mathbb{R}^{N}}{\operatorname{\mbox{min}}} \frac {1}{2} \parallel \y - \A \x^{0} \parallel_{2}^{2}   + u  \parallel \x^{0} \parallel_{1},
 \end{equation}
which is a convex approximation to the L0 penalizing solution \ref{L0}.  The best solution for estimate of the sparse vector $\x $  is given by the zero-norm regularized estimator which  is a hard combinatorial problem. These estimators, \eqref{L2} - \eqref{L1}, can equivalently be presented as solutions to constrained optimization problem \cite{CS}-\cite{CT}. This constrained optimization version of \eqref{L1} is known as the L1 penalized L2 minimization called LASSO (Least Absolute Shrinkage and Selection Operator) or BPDN(Basis Persuit Denoising), which can be set as Quadratic Programing (QP) and Quadratic Constrained Linear Programing (QCPL) optimization problems. \footnote[1]{In this paper we consider the former and leave the later as they are equivalent algorithms.} In the following subsection the above estimators are presented as a MAP estimator in Bayesian framework.

\subsection{Bayesian framework for Sparse signal}\label{sec:BCS}

%   Beginning with a given vector of  measurements $\y \in \mathbb{R}^{N}$  and measurement matrix $\A \in \mathbb{R}^{MxN}$,  assuming noisy  measurement with $\w  \in \mathbb{R}^{M} $  being i.i.d. Gaussian random variables with zero mean and covariance matrix  $ \I $, estimating the sparse vector $\x^{0}  \in \mathbb{R}^{N}$ is the problem that we are considering. A common approach to estimate  $\x^{0} $  is via the solution of the unconstrained optimization problem
%\begin{equation}\label{unconstrained optimization problem}
% \underset{ \x^{0}  \in \mathbb{R}^{N}}{\operatorname{min}} \frac {1}{2} \parallel \y - \A \x^{0} \parallel_{2}^{2}   + u  f( \x^{0} )
% \end{equation}
%for some nonnegative $u$. When  $f(\x^{0})=\parallel \x^{0} \parallel_{p}$,  for $p=0$, $1$, and $2$ we get the different estimators which are relevant for CS estimation problem.  The very popular one is  the $l_{1}$-penalized $ l_{2}$ minimization called LASSO (Least Absolute Shrinkage and Selection Operator) or the  $l_{2}$-penalized $ l_{1}$ minimization  called Basis Pursuit De-nosing (BPDN) \cite{FNW}, \cite{DMM}  which are equivalent and effective in estimating a high-dimensional data. 

Equivalently, the estimator of $\x^{0} $  in \eqref{unconstrained optimization problem} can generally be presented as MAP estimator under the Bayesian framework. Assume a prior probability distribution for $\x$ to be 
\begin{equation}\label{prior}
p_{u}(\x)= \frac {e^{-u f(\x)}}{\int_{\x \in \chi^{N}}e^{-u f(\x)}d\x}, 
\end{equation}
 where the cost function $ f : \chi \rightarrow \mathbb{R} $ is some scalar-valued, non negative function with $\chi \subseteq \mathbb{R} $ and
  \begin{equation}
f(\x)= \sum_{i=1}^{N}  f(x_{i}).
\end{equation}
such that for sufficiently large $u$,  $\int_{\bf{x} \in \chi^{n}} \exp (-u f(\x)) d\x$ is finite as in \cite{MAP}. And let the assumed variance of the noise  be given by
$$\sigma_{u}^2=\frac{\gamma}{u} $$ where $ \gamma $ is system parameter which can be taken as  $ \gamma=\sigma_{u}^2u $ where $\sigma_{u}^2$ is the assumed variance for each component of $\bf{n}$. Note that we incorporate  the sparsity in the prior pdf via $f(\bf{x})$. By \eqref{noisy CS problem} the probability density function of $\bf{y}$ given  $\bf{x}$  is given by 
 \begin{equation}
p_{\y\mid\x}(\y\mid\x;\A)= \frac{1}{(2\pi \sigma_{u}^2)^{N/2}}
e^{-\frac {1}{2  \sigma_{u}^2} \parallel \y - \bf{A} \x \parallel_{2}^{2}   } , 
\end{equation}
% \begin{equation}
%p_{\y\mid\x}(\y\mid\x;\A)=\frac {e^{-\frac {1}{2  \sigma_{u}^2} \parallel \y - \A \x \parallel_{2}^{2}   }}{\int_{\x \in \chi^{n}}e^{\frac {1}{2  \sigma_{u}^2} \parallel \y - \A \x\parallel_{2}^{2}   }d\x} = \frac{1}{(2\pi)^{N/2}}
%e^{-\frac {1}{2  \sigma_{u}^2} \parallel \y - \bf{A} \x \parallel_{2}^{2}   } , 
%\end{equation}
and prior distribution of $\bf{x}$  by \eqref{prior},  the posterior distribution  for the measurement channel \eqref{noisy CS problem} according to Bayes law is  
 \begin{equation}\label{posterior}
p_{\x\mid\y}(\x\mid\y;\A)=\frac {e^{-u(\frac {1}{2 \gamma} \parallel \y - \A \x \parallel_{2}^{2}   +   f(\x))}}{\int_{\x \in \chi^{n}}e^{-u(\frac {1}{2 \gamma} \parallel \y - \A \x \parallel_{2}^{2}   +   f(\x))}d\x}.
\end{equation}

Then the MAP estimator can be shown to be
\begin{equation}\label{MAP}
\hat{\x}^{MAP}= \underset{\x \in \chi^{n}}{\operatorname{arg min }}   \frac {1}{2\gamma} \parallel \y - \A \x \parallel_{2}^{2}   +  f(\x).
\end{equation} 
Now, as we choose different penalizing function in (\ref{MAP}) we get the different estimators defined above in equations \eqref{L2}, \eqref{L0}, and \eqref{L1} but this time under the Bayesian framework as a MAP estimator \cite{MAP}.
\begin{enumerate}
\item
 Linear Estimators: when  $f(\x)=\parallel  \x\parallel_{2}^{2}$  \eqref{MAP} reduces to 
\begin{equation}\label{Linear_MAP}
\hat{\x}_{Linear}^{MAP}=\A^{T}(\A \A^{T} +\gamma\I)^{-1}\y, 
\end{equation}
which is the LMMSE estimator. 
% This estimator performs worest for the CS problem in general as shown in figure 1. 
%But we ignore this estimator in our analysis since the following two estimators are more interesting for CS problems.
\item 
LASSO Estimator: when $f(\x)=\parallel  \x \parallel_{1}$ we get the LASSO estimator and \eqref{MAP} becomes
\begin{equation}\label{LASSO_MAP}
\hat{\x}_{Lasso}^{MAP}= \underset{\x \in \chi^{n}}{\operatorname{arg min }} \frac {1}{2\gamma} \parallel \y - \A \x \parallel_{2}^{2}   +   \parallel  \x \parallel_{1}.
\end{equation} 
\item
 Zero-Norm regularization estimator:  when $ f({\x})=\| {\x} \|_{0} $ ,  we get  the  Zero-Norm regularization estimator and \eqref{MAP} becomes
\begin{equation}\label{Zero Norm_MAP}
\hat{\x}_{Zero}^{MAP}= \underset{\x \in \chi^{n}}{\operatorname{arg min }} \frac {1}{2\gamma} \parallel \y- \A \x \parallel_{2}^{2}   +   \parallel  \x \parallel_{0}.
\end{equation} 
\end{enumerate}

Whether these minimization problems are solvable or not the replica analysis results can provide the asymptotic performances of all the above estimators via replica method as showed  in  \cite{GBS},  \cite{MAP}, \cite{KWT}, \cite{SS} and \cite{TK}. We apply  RS ansatz as used by M\"{u}ller and et al in \cite{RGM}  and RSB ansatz as used by Zaidel and et al  \cite{BRAR}  on vector precoding for MIMO. Actually, this work is an extension of the RSB analysis to MIMO systems done in \cite{BRAR} to the CS system. 

%   Actually we are inspired by the technique used by them for the similar linear model \eqref{noisy CS problem}.
% Since the RSB ansatz give a stable solution than RS ansatz under some circumstances.

%%%%%%%%%%%%%%%%%%%%%%%%%%%%%%%%%%%%%%%%%

\section{A Statistical Physics Analysis  }\label{sec: Analysis}

The performance of the Bayesian estimators like MMSE and MAP can be done using the pdf of the error vector. The error is random and it should be centered about zero for the estimator to perform well.  Kay showed  in that way (see section 11.6 in \cite{KAY})  the performance analysis of MMSE estimator.  We believe in general that inference for the asymptotic performance of MAP estimators is best done with statistical mechanical tools including RSB assumption and this is done in the sense of the mean square error (MSE). 

The posterior distribution \eqref{posterior} is  a sufficient statistics to estimate $\x^{0}$  \cite{GBS} and the denominator is called the normalizing factor or evidence in Bayesian inference according to \cite{macKAY} and Partition function in statistical mechanics. Actually, it is this connection, which gives the ground to apply the tools, which are used in statistical mechanics.  So the task of evaluating the above estimators for the sparse vector $\x^{0} $ can be translated to the statistical physics framework. And let us justify first how the analysis using statistical mechanical tool is  able to do it. \\   \\Define the Gibbs-Boltzmann distribution as 

\begin{equation} \label{definition of Gibbs-Boltzman dist.}
p_{\x}(\x)= \frac {1}{ \mathcal{Z}}e^{-\beta \cal {H}(\x)}
\end{equation}
where $ \beta $  is a constant known as the  inverse temperature in
the terminology of physical systems. For small $ \beta $, the prior probability becomes
flat, and for large $ \beta $, the prior probability has sharp modes. $\cal H$, which is an expression of the total energy of the system, is called the Hamiltonian in physics literature and   $\mathcal{Z} $ is the partition function given by

 \begin{equation}\label{definition of partition function}
\mathcal{Z}= \sum_{\chi^{N}}e^{-\beta \cal{ H}(\x)}d\x. 
\end{equation}
 Often the Hamiltonian can be given by a quadratic form like 

\begin{equation}\label{definition of Hamiltonian}
\mbox{$\cal {H}$} (\x)=\x^T \J \x,
\end{equation} 
with $ \J$ being a Random matrix of dimension $N \times N$.  Then the minimum average energy per component of $ \x$  can be given by 
\begin{equation}\label{eq:definition of energy}
\mathcal{ E }= \frac {1}{N}  \hspace{2mm}   \underset{\bf{x} \in  \chi^N }{\operatorname{min }} \hspace{2mm}   \cal H (\bf{x})
 \end{equation}
For our system that we considered to address, which is given by \eqref{MAP}  or equivalently by  \eqref{unconstrained optimization problem}, the Hamiltonian becomes
\begin{equation}\label{Hamiltonian  the present system}
\mbox{$\cal {H}$} (\x)= \frac{1}{2 \sigma_{u}^2 } (\y - \A \x )^{T}  (\y- \A \x )  + u f(\x) .
\end{equation}
 Compared to \eqref{definition of Hamiltonian},  the Hamiltonian in  \eqref{Hamiltonian  the present system} has regularizing term in addition to the quadratic form, which is the energy of the error,  in which the regularizing term $ f(\x)$ is accountable for addressing the problem in the CS. The Gibbs-Boltzman distribution is a solution to \eqref{MAP}  or to \eqref{unconstrained optimization problem} in general, after plugging \eqref{definition of partition function} and \eqref{Hamiltonian  the present system} since they are equivalent problems. The normalizing factor ( aslo called the partition function) of this distribution is central for calculating many important  variables and we shall begin from this term to analyse the CS estimators performance. \\

Assuming that $\x^{0}$ and $\x$ being drawn from the same discrete set (we shall later provide an example from such a set).  The partition function of the posterior distribution given in \eqref{definition of Gibbs-Boltzman dist.}  
%But for later use we define their distribution as
%
%%\begin{equation}\label{the distribution of x0 and x}
%%dF_{x^{0}}(x^{0})\equiv f_{x^{0}}(x^{0})dx^{0}\equiv \sum\limits_{\rm{x}^{0} \in \chi^{N}} P_{x^{0}}(\rm{x}^{0})\delta(x-\rm{x}^{0})d\it{x}^{0}
%%\end{equation}
%
%\begin{equation}\label{the distribution of x0 and x}
%dF_{\it{x}}(\it{x})\equiv f_{\it{x}}(\it{x})d\it{x}\equiv \sum\limits_{\rm{x} \in \chi} P_{\it{x}}(\rm{x})\delta(\it{x}-\rm{x})d\it{x}.
%\end{equation}
%Inference based on using the posterior distribution is optimal.  From this posterior distribution, the partition function 
becomes
 \begin{equation}\label{the partition function of  the present system}
\mathcal{Z}=\sum\limits_{\x \in \chi^{N}}e^{-\beta \big[\frac {1}{2\sigma_{u}^2} \parallel \y - \A \x \parallel_{2}^{2}   + u f(\x)\big]}, 
\end{equation}
by using  \eqref{definition of partition function} and \eqref{Hamiltonian  the present system}.  The posterior distribution \eqref{MAP} depends on the predetermined random variables $\y$ and $\A$ called quenched states in physics literature \cite{TUK}, \cite{THK}. That is, we use fixed states $ \y = \A \x^{0} +\w $  instead of $\y $ for the large system limit, as $ N, M \rightarrow \infty$, while maintaining $N/M$ fixed. We then calculate the nth moment of the partition function $Z$ with respect to the predetermined variables, $n$ replicas, hence the name replica method came from. The replicated partition function is then given by 
  \begin{equation}
\mathcal{Z}^n=\sum\limits_{\{\x^{a}\}}e^{-\beta \Bigl[ \frac {1}{2\sigma_{u}^2} \sum\limits_{a=1}^{n} \bigl (\parallel \bf{y} - \A \x^{a} \parallel_{2}^{2} \bigr)  +   \frac{\gamma} {\sigma_{u}^2}  \sum\limits_{a=1}^{n}  f(\x^{a}) \Bigr]},
\end{equation}
where $\sum\limits_{\{\x^{a}\}}=\sum\limits_{\x_{1} \in \chi^{N}}...\sum\limits_{\x_{n} \in \chi^{N}}$. And after substituting $\y$, it becomes 
   \begin{equation}\label{ the replicated  partition function }
\mathcal{Z}^n=\sum\limits_{\{\x^{a}\}}e^{-\beta \Bigl[ \frac {1}{2\sigma_{u}^2} \sum\limits_{a=1}^{n} \bigl (\parallel \A(  \x^{0}  - \x^{a})+\w \parallel_{2}^{2} \bigr)  +   \frac{\gamma} {\sigma_{u}^2}  \sum\limits_{a=1}^{n}  f(\x^{a}) \Bigr]} . 
\end{equation}

Averaging over the noise $\n$ first, we get     
\begin{align}\label{ averaged, over noise, replicated  partition function }
\int_{\mathbb{R}^M}\frac{d\n}{\pi^M}e^{-\frac {1}{2\sigma_{0}}( \w^T \w ) }\mathcal{Z}^n =
\sum\limits_{\{\bf{x}^{a}\}}e^{-\beta \Bigl[ \frac {1}{2} Tr \J \L(n)  +  \frac {\gamma} {\sigma_{u}^2}  \sum\limits_{a=1}^{n}  f(\bf{x}^{a}) \Bigr]}, 
 \end{align}
where $\J=\A^T\A $ and it is assumed to decompose into 
 \begin{equation} \label{eq:decomposiblity property}
\J=\O\D\O^{-1},
\end{equation}
and $\D$ is a diagonal matrix while $\O$ is $N \times N$ orthogonal matrix assumed to be drawn randomly from the uniform distribution defined by the Haar measure on the orthogonal group. For more clarity on this one can see --- in  \cite{TUK}.
%Further let $\D$ =diag($D_k$) and its eigenvalue spectrum is given by  $\rho(\lambda)$ . 
 And $L(n)$ is given by 
\begin{equation}
   \scriptstyle{\L(n)=-\frac{1}{\sigma_{u}^2}\sum\limits_{a=1}^{n}(  \x^{0}  - \x^{a})(  \x^{0}  - \x^{a})^T+ \frac{\sigma_{0}^2}{\sigma_{u}^2(\sigma_{u}^2+n\sigma_{0}^2)}\Biggl(\sum\limits_{a=1}^{n}(  \x^{0}  - \x^{a})\Biggr) \Biggl(\sum\limits_{b=1}^{n}(  \x^{0}  - \x^{b})\Biggr)^T}.
\end{equation}
Further averaging what we get on the right side of \eqref{ averaged, over noise, replicated  partition function } over the cross correlation matrix $\J$, by assuming the eigenvalue spectrum of $\J$ to be self-averaging, we get 
 \begin{align}\label{the Harish -Chandra -Itzykoson-Zuber integeral}
\nonumber \underset{\w,\J}   {\operatorname{E }} \bigl \{ \mathcal{Z}^n \bigr \}& = \underset{\bf{J}}   {\operatorname{E }} \Biggl( \sum\limits_{\{\x^{a}\}}e^{-\beta \Bigl[ \frac {1}{2} Tr \J\L(n)  +  \frac {\gamma} {\sigma_{u}^2}  \sum\limits_{a=1}^{n}  f(\x^{a}) \Bigr]} \Biggr) \\&=  \sum\limits_{\{\x^{a}\}}e^{  \frac {-\beta\gamma} {\sigma_{u}^2}  \sum\limits_{a=1}^{n}  f(\x^{a})} \underset{\J}   {\operatorname{E }} \Biggl(e^{-\beta \Bigl[ \frac {1}{2} Tr \J \L(n)   \Bigr]} \Biggr)
 \end{align}

The inner expectation in \eqref{the Harish -Chandra -Itzykoson-Zuber integeral} is the Harish -Chandra -Itzykoson-Zuber integral (again see in  \cite{RGM} and \cite{BRAR} and the references therein).  The plan here is to evaluate the fixed-rank matrices $\L(n)$ as $N\rightarrow \infty$.  Further following the explanation in  \cite{BRAR} \eqref{the Harish -Chandra -Itzykoson-Zuber integeral}   becomes 
\begin{equation}\label{R-transform plus zero order}
\underset{\w,\J }   {\operatorname{E }} \bigl \{ \mathcal{Z}^n \bigr \} = \sum\limits_{\{\x^{a}\}}e^{  \frac {-\beta\gamma} {\sigma_{u}^2}  \sum\limits_{a=1}^{n}  f(\x^{a})} e^{-N \sum\limits_{a=1}^{n}\int_{0}^{\lambda_{a}}R(-w)dw + o(N)} 
\end{equation}
 where  $R(w)$ is the R-transform of the limiting eigenvalue distribution  of the matrix J( see, definition 1 in  \cite{RGM}  of R-transform  or in \cite{Ralf} and   \cite{RALF2} for better understanding of R-transform)  and $\{\lambda_a \}$ denote the Eigenvalues of the  $ n \times n $  matrix $    \beta\Q$ , with $\Q$ defined through
  \begin{equation}\label{Correlation matrix}
 \scriptstyle{Q_{ab}\equiv  \frac{1}{N} \Biggl[-\frac{1}{\sigma_{u}^2}\sum\limits_{i=1}^{N}   (x_{i}^{0}  -  x_{i}^{a})^T(  x_{i}^{0}  -  x_{i}^{b}) + \frac{\sigma_{0}^2}{\sigma_{u}^2(\sigma_{u}^2+n\sigma_{0}^2)}\Biggl(\sum\limits_{i=1}^{N} (  x_{i}^{0}  -  x_{i}^{a})\Biggr)^T \Biggl(\sum\limits_{i=1}^{N} (  x_{i}^{0}  -  x_{i}^{b})\Biggr)\Biggr]},
  % \Q_{ab}\equiv - \frac{1}{N} \sum\limits_{i=1}^{N}   { x_{i}^{a}}^T  x_{i}^{b} \hspace {5mm} for   \hspace {5mm}  a,b=0,1, \cdots , n
\end{equation}
 for   $a,b=1, \cdots , n$.
%where $$\kappa_{n}= \frac{\sigma_{u}^2+n\sigma_{0}^2}{\sigma_{0}^2(1-n)-\sigma_{u}^2}.$$

After applying replica trick, the average free energy can be given by 
 \begin{align}  \label{eq:average free energy}
\nonumber\beta\bar{\mathcal{F}}&=-  \underset{N \rightarrow  \infty}{\operatorname{lim }} \frac{1}{ N}   \underset{\bf{n},\bf{R}}   {\operatorname{E }} \{\log\hspace{1mm} \mathcal{Z} \}\\
& = -  \underset{N \rightarrow  \infty}{\operatorname{lim }} \frac{1}{ N}  \underset{n \rightarrow 0}{\operatorname{lim }}     \frac{\partial}{ \partial n}  \log \underset{\bf{n}, \bf{R} }{\operatorname{E }}\{ ( \mathcal{Z} )^{n}\}  \\
 \nonumber%&=-\underset{n \rightarrow 0}{\operatorname{lim }}    \frac{\partial}{ \partial n} \underbrace{  {\rm   \underset{N \rightarrow  \infty}{\operatorname{lim }} \frac{1}{ N}  \log  \underset{\bf{n}, \bf{J} }{\operatorname{E }}\{ ( \mathcal{Z} )^{n}\} } }_{\rm  \Xi_{n}} 
 \end{align}
and the energy of the error can be calculated from the average free energy as 
\begin{align}\label{avarage energy penality}
\bar{\mathcal{E}}&= \underset{\beta \rightarrow  \infty}{\operatorname{lim }} \frac{1}{ \beta}\bar{\mathcal{F}}\\
&=- \underset{\beta \rightarrow  \infty}{\operatorname{lim }} \frac{1}{ \beta} \underset{N \rightarrow  \infty}{\operatorname{lim }} \frac{1}{ N}   \underset{\bf{n},\bf{R}}   {\operatorname{E }} \{\log\hspace{1mm} \mathcal{Z} \}  \nonumber \\&=-\underset{\beta \rightarrow  \infty}{\operatorname{lim }} \frac{1}{ \beta}\underset{n \rightarrow 0}{\operatorname{lim }}    \frac{\partial}{ \partial n} \underbrace{  {\rm   \underset{N \rightarrow  \infty}{\operatorname{lim }} \frac{1}{ N}  \log  \underset{\bf{n}, \bf{J} }{\operatorname{E }}\{ ( \mathcal{Z} )^{n}\} } }_{\rm  \Xi_{n}} .
\end{align}
where we get \eqref{avarage energy penality} by using one of the assumptions used in replica calculations, after interchanging the order of the limits  we assumed we get the same result. Further, for $\Xi_{n}$ we have 
\begin{equation}\label{eq:R-transform without zero order} 
\Xi_{n}=-  \underset{N \rightarrow  \infty}{\operatorname{lim }} \frac{1}{N} \log \Biggl(\sum\limits_{\{\x^{a}\}}e^{  \frac {-\beta\gamma} {\sigma_{u}^2}  \sum\limits_{a=1}^{n}  f(\x^{a})} e^{  \sum\limits_{a=1}^{n}\int_{0}^{\lambda_{a}}R(-w)dw } \Biggr) .
 \end{equation}
Since the additive exponential terms of order $\circ (N)$ have no effect on the results when taking saddle point integration in the limiting regime as $N\rightarrow \infty$  due to the factor $\frac{1}{N}$ outside the logarithm in \eqref{eq:R-transform without zero order} any such terms are dropped further for notational simplicity as in  \cite{BRAR}.

 In order to find the summation in   \eqref{eq:R-transform without zero order}   we employed the procedure in  \cite{BRAR} and  the $nN$ dimensional space spanned by the replicas is split into subshells, defined through $ n \times n$ matrix $\bf{Q}$
\begin{equation}\label{subshels}
S(\Q)= \{  \x^{1},...,\x^{n}\mid ( \x^{0}-\x^{a})^{T}( \x^{0}-\x^{b}) =\frac{N}{\kappa_{n}}Q_{ab} \}.
\end{equation}
%This means the inner product of two different replicated vectors $\bf{x}^{a}$ and $ \bf{x}^{b}$ is constant in each subshell. 
The limit $N\rightarrow \infty$ able us to use saddle point integration. Hence we can have the following general result as similar to \cite{BRAR}  but extended in this work with the term, which pertains to CS,  where  we have given the expression that helps to evaluate the performances of the CS estimators using equation \eqref{eq:definition of energy}.

%%%%%%%%%%%%%%%%%%%%%%%%%%%%%%%%%%%%%%%%%%%%%%%%%%%%%%%%%%%
\begin{proposition}
\emph{}
\label{prop: The limiting energy }
The energy $\mathcal{ E }$ from \eqref{eq:definition of energy}, for any inverse temperature $\beta$, any structure of $\Q$ consistent with \eqref{subshels}, and any R-transform $R(.)$ such that $R(\Q)$ is well-defined, is given by 
\begin{equation}\label{eq:the limiting energy}
\bar{\mathcal{ E }}= -\underset{n \rightarrow 0 }{\operatorname{lim }} \frac{1}{ n} Tr[\Q R(-\beta\Q)] ,
\end{equation}
where $\Q$ is the solution to the saddle point equation
%\begin{equation}
%\Q = \int \int \scriptstyle{  \frac{  \sum\limits_{\{\x \in{ \chi}^{n} \}}  (x^{0}\1- \x) (x^{0}\1- \x)^{T}   e^{   (x^{0}\1- \x)^{T}\tilde{\Q} (x^{0}\1- \x) - \frac {\beta\gamma} {\sigma_{u}^2}   \sum\limits_{a=1}^{n} \lvert  x^{a}  \rvert } }{  \sum\limits_{\{\x \in {\chi}^n \}} e^{   (x^{0}\1- \x)^{T}\tilde{\Q} (x^{0}\1- \x) - \frac {\beta\gamma} {\sigma_{u}^2}   \sum\limits_{a=1}^{n} \lvert  x^{a}  \rvert } }} dF_{X}(\x)dF_{X^{0}}(x^{0})
%\end{equation}
\begin{equation}
\Q = \int  \scriptstyle{  \frac{  \sum\limits_{\{ \tilde{\x} \in{ \chi}^{n} \}}  (x^{0}\1-  \tilde{\x}) (x^{0}\1-  \tilde{\x})^{T}   e^{   (x^{0}\1-  \tilde{\x})^{T}\tilde{\Q} (x^{0}\1-  \tilde{\x}) - \frac {\beta\gamma} {\sigma_{u}^2}  \tilde{\x} } }{  \sum\limits_{\{ \tilde{\x} \in {\chi}^n \}} e^{   (x^{0}\1-  \tilde{\x})^{T}\tilde{\Q} (x^{0}\1-  \tilde{\x}) - \frac {\beta\gamma} {\sigma_{u}^2} \tilde{\x} } } } dF_{X^{0}}(x^{0})
\end{equation}

\end{proposition}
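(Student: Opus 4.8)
The plan is to evaluate the replica sum appearing in $\Xi_n$ of \eqref{eq:R-transform without zero order} by Laplace's method in the large-$N$ limit, and then feed the result into the energy formula \eqref{avarage energy penality}. First I would group the replica configurations $\{\x^a\}$ by their mutual overlaps, inserting the partition of unity $1=\int d\Q\,\prod_{a,b}\delta\!\big((\x^0-\x^a)^T(\x^0-\x^b)-\tfrac{N}{\kappa_n}Q_{ab}\big)$ built from the subshells $S(\Q)$ of \eqref{subshels}. This converts $\sum_{\{\x^a\}}$ into an integral over the $n\times n$ matrix $\Q$ of an integrand of the form $e^{N[\,\mathcal{G}_{\mathrm{e}}(\Q)+\mathcal{G}_{\mathrm{s}}(\Q)\,]}$, where the energetic part $\mathcal{G}_{\mathrm{e}}(\Q)=\sum_{a=1}^{n}\int_0^{\lambda_a}R(-w)\,dw$ carries the spectrum of $\J$ through the eigenvalues $\{\lambda_a\}$ of $\beta\Q$, exactly as in \eqref{R-transform plus zero order}, while the entropic part $\mathcal{G}_{\mathrm{s}}(\Q)$ is the exponential growth rate of the volume of $S(\Q)$.

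Next I would compute $\mathcal{G}_{\mathrm{s}}$ by the standard conjugate-matrix trick: represent each $\delta$ by its Fourier integral, introducing an $n\times n$ conjugate matrix $\tilde{\Q}$, so that the constraint contributes $-\,\mathrm{Tr}[\Q\tilde{\Q}]$ (up to the constant $\kappa_n$) and the remaining weight factorises over the $N$ sites. Since $f(\x)=\sum_i f(x_i)$ and the components are i.i.d., every per-site factor is identical, and averaging the source component $x^0$ over $dF_{X^0}$ yields the single-letter sum
$$\mathcal{M}(\tilde{\Q})=\int\sum_{\tilde{\x}\in\chi^n}e^{\,(x^0\1-\tilde{\x})^T\tilde{\Q}(x^0\1-\tilde{\x})-\frac{\beta\gamma}{\sigma_{u}^2}f(\tilde{\x})}\,dF_{X^0}(x^0).$$
Collecting terms, $\frac1N\log$ of the integrand reduces $\Xi_n$ to an extremum over $(\Q,\tilde{\Q})$ of $\mathrm{Tr}[\Q\tilde{\Q}]-\mathcal{G}_{\mathrm{e}}(\Q)-\log\mathcal{M}(\tilde{\Q})$, up to signs and the $o(N)$ terms already discarded in \eqref{eq:R-transform without zero order}.

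The saddle point then supplies both equations I need. Stationarity in $\tilde{\Q}$ equates $\Q$ to the $\tilde{\Q}$-gradient of $\log\mathcal{M}$, which is precisely the single-letter Gibbs average $\langle(x^0\1-\tilde{\x})(x^0\1-\tilde{\x})^T\rangle$; this is the fixed-point equation displayed in the proposition. Stationarity in $\Q$ relates $\tilde{\Q}$ to the matrix R-transform: applying $\frac{d}{d\lambda}\int_0^{\lambda}R(-w)\,dw=R(-\lambda)$ to the eigenvalues of $\beta\Q$ gives $\nabla_\Q\mathcal{G}_{\mathrm{e}}(\Q)=\beta R(-\beta\Q)$ as a matrix function. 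Substituting the saddle value back, differentiating in $n$, and letting $n\to0$ and $\beta\to\infty$ as prescribed by \eqref{avarage energy penality}, the entropic and Legendre contributions cancel in the zero-temperature limit, leaving only the energetic trace $\bar{\mathcal{E}}=-\lim_{n\to0}\frac1n\,\mathrm{Tr}[\Q R(-\beta\Q)]$, which is \eqref{eq:the limiting energy}.

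I expect the main obstacle to be the non-rigorous core of the replica method itself: justifying the interchange of the $n\to0$ and $N\to\infty$ limits and the analytic continuation of $\Xi_n$ to real $n$, together with the self-averaging and decoupling assumptions that let the subshell volume factorise over sites. A second, more technical, difficulty is treating $\mathcal{G}_{\mathrm{e}}$ as a genuine matrix function of $\Q$: its variation and the identification $\nabla_\Q\mathcal{G}_{\mathrm{e}}=\beta R(-\beta\Q)$ require that $R(\Q)$ be well-defined (the hypothesis of the proposition) and that $\Q$ and $\tilde{\Q}$ be simultaneously diagonalisable at the saddle, which only a later RS or 1RSB ansatz makes explicit. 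Since the statement is deliberately kept for \emph{any} structure of $\Q$, the $n\to0$ evaluation of the trace is legitimately left open here until such an ansatz is imposed.
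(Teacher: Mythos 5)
Your proposal follows essentially the same route as the paper's Appendix B: the subshell decomposition of \eqref{subshels} with Dirac constraints, the Fourier (conjugate-matrix) representation introducing $\tilde{\Q}$ and the term $-\mathrm{Tr}(\tilde{\Q}\Q)$, the per-site factorisation and averaging over $dF_{X^{0}}$ giving the single-letter function $\log M(\tilde{\Q})$, and the two stationarity conditions yielding $\tilde{\Q}$ as a matrix R-transform of $\Q$ and $\Q$ as the single-letter Gibbs average, i.e.\ the displayed saddle-point equation. The caveats you flag (analytic continuation in $n$, limit interchange, self-averaging, and treating $R$ as a matrix function of $\Q$) are exactly the assumptions the paper also leaves implicit, so the proposal is correct and matches the paper's argument.
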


\begin{proof}
See Appendix \ref{app: Prof of propostion The limiting energy }.
\end{proof} 
 
%%%%%%%%%%%%%%%%%%%%%%%%%%%%%%%%%%%%%%%%%%%%%%%%%%%%%%%%%%%

Further, to get specific results we need to assume simple structure onto the $n \times n$ cross correlation matrix $\Q$ at the saddle point. So we assume two different assumptions for the entries of $\Q$ called ansatz:  replica symmetry(RS) and replica symmetric breaking (RSB) ansatz.  Then compare the above limiting energy for the different estimators considered in this paper using the two types of ansatz for the CS system.  That is the main purpose that we want to show in this paper. And we took the structures similar to\cite{BRAR} :
\begin{enumerate}
\item replica symmetry ansatz : 
\begin{equation}\label{eq:RS of Q}
\Q =q_{0} \1_{n\times n}+ \frac{b_{0} } {\beta} \I_{n \times n}
\end{equation}
\item one replica symmetry breaking ansatz : 

\begin{equation}\label{eq:RSB of Q}
\Q=q_{1} \1_{ n\times n} + p_{1} \I_{ \frac{n\beta}{\mu_{1}} \times \frac{n\beta}{\mu_{1}} } \otimes \1_{ \frac{\mu_{1}}{\beta} \times \frac{\mu_{1}}{\beta} }+ \frac{ b_{1} }{\beta} \I_{n \times n}
\end{equation}

\end{enumerate}
Applying these assumptions we found some results as given in the following subsections. In the first subsection we assume the RS ansatz which can be considered as the extension of \cite{RGM}.  In the last two subsections we assume  RSB ansatz as an extension of \cite{BRAR}  to CS.  
%Then we compare the results from the two assumptions. \\

%%%%%%%%%%%%%%%%%%%%%%%%%%%%%%%%%%%%%%%%%

\subsection{LASSO estimator with RS ansatz}\label{sec:  LASSO estimator with RS ansatz}
Consider the LASSO estimator given in \eqref{LASSO_MAP}, which is equivalent to the solution of the main unconstrained optimization  problem \eqref{unconstrained optimization  problem} in $l_{1}$ penalized sense. Its performance can be expressed in terms of the limiting energy penalty per component using two macroscopic variables $q_{0}$ and $b_{0} $  given by 
\begin{align}\label{eq:macroscopic_q02} 
 q_{0}&=\int_{\mathbb{R}} \int_{\mathbb{R}}  \Bigl| x^{0}- \Psi_{1} \Bigl|^2 Dz dF_{X^{0}}(x^{0}),\\
\end{align}
\begin{align}\label{eq:macroscopic_b02}
b_{0}&=\frac{1}{f_{0}}  \int_{\mathbb{R}} \int_{\mathbb{R}} \Re \Bigg\{x^{0}- \Psi_{1} z^{*}\Bigg\} Dz dF_{X^{0}}(x^{0}),
\end{align}
where
\begin{equation}
 \Psi_{1} =\arg\min_{x \in \chi} \hspace{1mm}\Bigl| -z f_{0}+2e_{0}(x^{0}-x) - \frac {\gamma} {\sigma_{u}^2}  \Bigr|,
\end{equation}
%\begin{align}\label{q}
%q_{0}&=-\frac{b_{0}}{\beta} + \int_{\mathbb{R}} \int_{\mathbb{R}} \int_{\mathbb{R}}  \frac{ \sum\limits_{\{x\in \chi \}} (x^{0}  -  x)^2 \zeta     } {  \sum\limits_{\{x\in \chi \}}\zeta  } Dz dP_{x}(x)dP_{0}(x^{0}),
%\end{align}
%
%\begin{align}\label{b}
%b_{0}&=\frac{1}{f_{0}}\int_{\mathbb{R}} \int_{\mathbb{R}} \int_{\mathbb{R}}  \frac{  \sum\limits_{\{x\in \chi \}} \Re\{(x^{0}  - {x}_{i}^{a}) z^{*}\} \zeta } {\sum\limits_{\{x\in \chi \}}  \zeta  } Dz dP_{x}(x)dP_{0}(x^{0}),
%\end{align}
%Further,  $e_{0}$ and $f_{0}$ are given as 
%\begin{align}\label{fixed point eq for macroscopic variable f0}
%f_{0}\displaystyle_{\longrightarrow }^{n\rightarrow 0}\sqrt{ \frac{2}{\beta } \Biggl[ \frac{\sigma_{0}^2}{\sigma_{u}^4}R\Bigl(\frac{-b_{0}}   {\sigma_{u}^2}\Bigr) +\frac{q_{0} \sigma_{u}^2+b_{0}\sigma_{0}^2}{\sigma_{u}^6} R'\Bigl(\frac{-b_{0} }  { \sigma_{u}^2} \Bigr)\Biggr]}.\end{align}

\begin{align}\label{eq:macroscopic_e0}
e_{0}&=\frac{1}  { \sigma_{u}^2}   R\Bigl(\frac{-b_{0} }  { \sigma_{u}^2}\Bigr),
\end{align}

\begin{align}\label{eq:macroscopic_f0}
 f_{0}&=\sqrt{ 2 \frac{ q_{0} }{\sigma_{u}^4} R'\Bigl(\frac{-b_{0} }  { \sigma_{u}^2} \Bigr)}, 
\end{align}
and $Dz $ is refering about integration over Gaussian measure, while $dF_{X^{0}}$ refers to integration over the pdf of $x^{0}$ (See Appendix B).  Under RS ansatz assumptions we then get the following statement.

\begin{proposition}
\emph{}
\label{prop: The limiting energy for the LASSO estimator in RS}
Given the LASSO estimator in \eqref{LASSO_MAP} and  the macroscopic variables $q_{0}$ and $b_{0} $, in addition given the conditions in proposition 1 , the energy in \eqref{eq:the limiting energy} simplifies to 
\begin{align}\label{eq: The limiting energy for the LASSO estimator in RS}
\mathcal{ \bar{E}_{\mbox{rs}}^{\mbox{lasso}} }&=\frac{ q_{0}  }{\sigma_{u}^2}R\Bigl( \frac{-b_{0}} { \sigma_{u}^2} \Bigr) - \frac{b_{0}  q_{0}  }{\sigma_{u}^4}  R'\Bigl( \frac{-b_{0}} { \sigma_{u}^2} \Bigr)
\end{align}

\end{proposition}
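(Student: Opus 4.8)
The plan is to treat the claimed identity \eqref{eq: The limiting energy for the LASSO estimator in RS} as a direct specialization of the general energy formula \eqref{eq:the limiting energy}. Since the macroscopic variables $q_{0}$ and $b_{0}$ are taken as given (they are the LASSO-specific solution of the saddle-point equation, encoded through the proximal map $\Psi_{1}$ and the couplings $e_{0},f_{0}$), the only remaining task is a matrix-algebra simplification of \eqref{eq:the limiting energy} once the replica-symmetric structure \eqref{eq:RS of Q} is imposed on $\Q$. First I would substitute $\Q=q_{0}\1_{n\times n}+\frac{b_{0}}{\beta}\I_{n\times n}$ and use that, by functional calculus, $R(-\beta\Q)$ is a (convergent power series) function of $\Q$ and hence simultaneously diagonalizable with it. The entire computation then reduces to diagonalizing $\Q$.

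The key observation is that the all-ones matrix $\1_{n\times n}$ has eigenvalue $n$ with multiplicity one (the uniform mode) and eigenvalue $0$ with multiplicity $n-1$. Carrying the $-1/\sigma_{u}^2$ normalization of \eqref{Correlation matrix} through the ansatz, the matrix whose R-transform is taken acquires a single ``longitudinal'' eigenvalue $\lambda_{\parallel}(n)$ that depends on $n$ and carries the $q_{0}$ contribution, together with an $(n-1)$-fold degenerate ``transverse'' eigenvalue $\lambda_{\perp}=-b_{0}/\sigma_{u}^2$ that is independent of $n$. Consequently $Tr[\Q R(-\beta\Q)]$ is just the sum of the eigenvalues of $\Q$ weighted by the spectrally mapped values $R(\lambda_{\parallel})$ and $R(\lambda_{\perp})$, i.e. one longitudinal term plus $n-1$ identical transverse terms.

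It then remains to perform the replica continuation $n\to 0$. Here I would note that the trace vanishes identically at $n=0$: the single longitudinal term (which reduces to the transverse value at $n=0$) is exactly cancelled by the $-1$ coming from the factor $n-1$ multiplying the transverse term. Hence $-\frac{1}{n}Tr[\Q R(-\beta\Q)]$ has a finite limit equal to minus the coefficient of $n$ in the Taylor expansion about $n=0$. Extracting that coefficient produces exactly two contributions: differentiating the prefactor of the longitudinal term yields the $\frac{q_{0}}{\sigma_{u}^2}R(-b_{0}/\sigma_{u}^2)$ piece, while the chain rule acting on the $n$-dependence of $\lambda_{\parallel}$ inside $R$ yields the $-\frac{b_{0}q_{0}}{\sigma_{u}^4}R'(-b_{0}/\sigma_{u}^2)$ piece, matching \eqref{eq: The limiting energy for the LASSO estimator in RS}. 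A residual term of order $1/\beta$ survives at finite temperature and is discarded in the zero-temperature (MAP) limit $\beta\to\infty$ in which the LASSO energy is defined.

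The main obstacle is the $n\to 0$ analytic continuation itself: one must keep the multiplicity of the transverse mode as the formal quantity $n-1$ (rather than a nonnegative integer) and differentiate consistently, and one must be scrupulous about the interacting limits $N\to\infty$, $n\to 0$ and $\beta\to\infty$ together with the $-1/\sigma_{u}^2$ rescaling inherited from \eqref{Correlation matrix}, since these jointly fix the argument $-b_{0}/\sigma_{u}^2$ of the R-transform as well as the overall sign. Verifying that the subleading longitudinal--transverse cancellation at $n=0$ is exact, so that no spurious $O(1)$ term contaminates the $\frac{1}{n}$ limit, is the step I would check most carefully.
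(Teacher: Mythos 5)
Your proposal is correct in substance and reaches the stated formula, but it takes a genuinely different route from the paper's own proof. You treat Proposition \ref{prop: The limiting energy } as a black box and simply specialize its trace formula \eqref{eq:the limiting energy} to the RS ansatz: diagonalize $\Q$, identify one $n$-dependent longitudinal eigenvalue carrying $q_{0}$ and $n-1$ transverse eigenvalues at $-b_{0}/\sigma_{u}^2$, use the exact cancellation of the trace at $n=0$, and read off the linear-in-$n$ coefficient (prefactor derivative giving the $R$ term, chain rule inside $R$ giving the $R'$ term); that is indeed the correct mechanism and reproduces \eqref{eq: The limiting energy for the LASSO estimator in RS} up to $O(1/\beta)$ corrections. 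The paper's Appendix C never invokes \eqref{eq:the limiting energy} directly: it redoes the free-energy computation under the RS ansatz for both $\Q$ and $\tilde{\Q}$, computing $\mbox{Tr}(\tilde{\Q}\Q)$, evaluating $\log M$ by the Hubbard--Stratonovich transform, obtaining $\mathcal{G}(\Q)$ from the eigenvalues of $\L(n)$ (degeneracies $1$, $n-1$, $N-n$, which is where your longitudinal/transverse picture actually lives), deriving the fixed-point equations \eqref{eq:macroscopic_q02}--\eqref{eq:macroscopic_f0} for $e_{0},f_{0},q_{0},b_{0}$, and only then differentiating $\Xi_{n}$ at $n=0$ and sending $\beta\to\infty$. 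The longer route buys two things your shortcut assumes: it derives the macroscopic fixed-point equations rather than taking them as given, and it tracks the $\frac{1}{\beta}\ln\zeta$ and $\mbox{Tr}(\tilde{\Q}\Q)$ contributions explicitly, which do not individually vanish as $\beta\to\infty$ and are needed for the final coefficients; in your route those cancellations are presupposed inside Proposition \ref{prop: The limiting energy }. Two bookkeeping points you flag but should pin down if this is written out: with the bare ansatz \eqref{eq:RS of Q} the spectral argument would be $-b_{0}$, and the $\sigma_{u}^{2}$ scaling giving the argument $-b_{0}/\sigma_{u}^{2}$ only appears after routing the ansatz through $\L(n)$ as in \eqref{Correlation matrix}; and the explicit minus sign in \eqref{eq:the limiting energy} must be absorbed by the sign convention of \eqref{Correlation matrix}, since otherwise your two contributions emerge with the opposite overall sign to the proposition.
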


\begin{proof}
See Appendix B.

\end{proof}

\subsection{  LASSO estimator with 1RSB ansatz} \label{sec:LASSO estimator estimator with 1RSB ansatz}
Moving to the very purpose of the present paper, we use RSB ansatz instead of RS and we repeat what has been done in the above subsections. The limiting energy in this case involves four macroscopic variables like $b_{1}$, $p_{1}$, $q_{1}$, and $\mu_{1}$, which can be given by  the following fixed point equations as $n\rightarrow 0$ and $\beta  \rightarrow \infty$,  as showed in appendix D, and using the compact notation as in \cite{BRAR}.  Let 
\begin{equation}\label{eq: compact exponential argument}
\Delta (y,z) \equiv  e^{-\mu_{1} \min_{x \in \chi}  - 2 \Re\{(x^{0}- x) (f_{1}z^{*}+g_{1}y^{*}) \}    + e_{1}    (x^{0}- x)^2 -\frac {\gamma} {\sigma_{u}^2}   |x|} , \hspace{5mm} (y,z) \in \Re^2
\end{equation}
and its normalized version 
\begin{equation}\label{eq: compact exponential argument}
\tilde{\Delta} (y,z) = \frac{\Delta (y,z) }{\int_{\mathbb{C}}\Delta (\tilde{y},z) d\tilde{y} }  
\end{equation}

\begin{align}\label{eq: macroscopic variables or the 1RSB ansatz1}
b_1 + p_1\mu_1 &=  \frac{1}{f_1} \int  \int_{\mathbb{C}^2} \Re \Big\{  x^{0} - \bigl(\Psi_{2} \bigr)z^* \Big\} \tilde{\Delta} (y,z)\mbox{$D y  Dz  dF_{X^{0}}(x^{0}) $}\\
b_1 +(q_1 +p_1)\mu_1 &=  \frac{1}{g_1} \int \int_{\mathbb{C}^2} \Re \Big\{  x^{0} - \bigl(\Psi_{2}\bigr)y^* \Big\}  \tilde{\Delta} (y,z) \mbox{$D y  Dz  dF_{X^{0}}(x^{0}) $}
\end{align}
\begin{align}\label{eq: macroscopic variables or the 1RSB ansatz2}
q_1 +p_1&=  \frac{1}{g_1} \int  \int_{\mathbb{C}^2}  | \Psi_{2} |^2 \tilde{\Delta} (y,z) \mbox{$D y  Dz  dF_{X^{0}}(x^{0}) $}
\end{align}
where$$\Psi_{2}=\arg\min_{x \in \chi} \hspace{1mm}\Bigl| -(f_{1}z^{*}+g_{1}y^{*})    +  e_{1}   (x^{0}- x) -\frac {\gamma} {\sigma_{u}^2}    \Bigr|,$$
%
% $$\Psi_{2}=\arg\min_{x \in \chi} \hspace{1mm}\Bigl| 2 \Re\{(x^{0}- x) (f_{1}z^{*}+g_{1}y^{*}) \}    -  e_{1}   | (x^{0}- x)|^2 -\frac {\gamma} {\sigma_{u}^2}   |x| \Bigr|,$$
and 
\begin{align}\label{the saddel point integration calculations-mu3}
\int_{\frac{b_{1}}{\sigma_{u}^2}}^{\frac{b_{1}+\mu_{1} p_{1}}{\sigma_{u}^2}  }R(-w)dw    &= - R\bigl( -\frac{b_{1}+\mu_{1} p_{1}}{\sigma_{u}^2}\bigr)-\mu_{1}^2 \Bigl((q_{1}  +p_{1} ) g_{1}^2+p_{1}  f_{1}^2\Bigr)  \nonumber\\
& + \int  \int_{\mathbb{C}}    \log \Bigl( \int_{\mathbb{C}} \Delta (y,z) \mbox{$Dy$} \Bigr)\mbox{$  Dz  dF_{X^{0}}(x^{0}) $},
\end{align}
where the other variables $e_{1}$, $f_{1}$, and $g_{1}$, are given by  
\begin{align}\label{eq: macroscopic variables or the 1RSB ansatz}
e_{1}&=\frac{1}{\sigma_{u}^2}R(\frac{-b_{1} } { \sigma_{u}^2} ),\\
g_{1}&=\sqrt{\frac{1}{\mu_{1}\sigma_{u}^2}      \Biggl[R(\frac{-b_{1} } { \sigma_{u}^2} )-R(\frac{-b_{1}-\mu_{1}p_{1} } { \sigma_{u}^2} )\Biggr]},\\
f_{1}&\displaystyle_{\longrightarrow }^{n\rightarrow 0 }   \frac{ 1}{\sigma_{u}^2} \sqrt{ q_{1} R'(\frac{-b_{1}-\mu_{1}p_{1} } { \sigma_{u}^2} )}
%f_{1}&\displaystyle_{\longrightarrow }^{n\rightarrow 0}\sqrt{\frac{1}{\beta} \Biggl[\frac{\sigma_{0}^2}{\sigma_{u}^4}R(\frac{-b_{1}-\mu_{1} p_{1} } { \sigma_{u}^2} )+\frac{(\sigma_{u}^2\beta q_{1}+\sigma_{0}^2(b_{1}+\mu_{1}p_{1}))}{\sigma_{u}^6}R'(\frac{-b_{1}-\mu_{1}p_{1} } { \sigma_{u}^2} )\Biggr]},
\end{align}
Then the following two statements are the extensions of the propositions in \cite{BRAR} to CS problems. 

\begin{proposition}
\emph{}
\label{prop: The limiting energy for the LASSO estimator in RSB  }
Given the LASSO estimator in \eqref{LASSO_MAP} and suppose the random matrix $\J$ satisfies the decomposability property \eqref{eq:decomposiblity property}. Then under some technical assumptions, including one-step replica symmetry breaking, and the  macroscopic variables given by the above fixed point equations,  the effective energy penalty per component converges in probability as $N$, $M \rightarrow \infty$, $N/M < \infty$ , to  

\begin{align}\label{equ: The limiting energy for the LASSO estimator in RSB  }
\mathcal{ \bar{E}_{\mbox{1rsb}}^{\mbox{LASSO}} }&=\scriptstyle{\frac{  1 } { \sigma_{u}^2} (q_{1} + p_{1} +\frac{b_{1}}{\mu_{1}})R(\frac{-b_{1}-\mu_{1} p_{1}  } { \sigma_{u}^2})  -\frac{b_{1}}{\mu_{1}\sigma_{u}^2}R(-\frac{b_{1}} {\sigma_{u}^2}) }  \nonumber\\ &  \scriptstyle{ +   q_{1} (\frac{b_{1}+\mu_{1} p_{1}  } { \sigma_{u}^2} )  R'(\frac{-b_{1}-\mu_{1} p_{1}  } { \sigma_{u}^2})     }  
%\mathcal{ \bar{E}_{\mbox{1rsb}}^{\mbox{lasso}} }&=\frac{  q_{1} } { \sigma_{u}^2} R(\frac{-b_{1}-\mu_{1} p_{1}  } { \sigma_{u}^2})   + \frac{1}{\mu_{1}}\int_{0}^{\frac{b_{1}+\mu_{1} p_{1}}{\sigma_{u}^2}  }R(-w)dw -  \frac{ 1}{\mu_{1}}\int_{0}^{\frac{b_{1}}{\sigma_{u}^2}  }R(-w)dw  \nonumber  \\&  -\underset{\beta \rightarrow \infty }{\operatorname{lim }} \frac{1}{ \beta} \Biggl\{\frac{\beta}{\mu_{1}}\int \int \log  \int_{\mathbb C}  \int_{\mathbb C}   \Biggl(  \sum\limits_{\{\x \in \chi \}} \mbox{$\cal{K}$} \mbox{( $x$, $y$, $z$)}  \Biggr) ^{\frac{\mu_{1}}{\beta}}  \mbox{$Dy$}   \mbox{$ Dz  dP_{a}(\x^{a})dP_{0}(\x^{0}) $} \Biggr \} \\
\end{align}
\end{proposition}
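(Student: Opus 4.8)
The plan is to specialize the general expression \eqref{eq:the limiting energy} to the one-step replica-symmetry-breaking structure \eqref{eq:RSB of Q}, mirroring exactly the route used for the RS case in Proposition \ref{prop: The limiting energy for the LASSO estimator in RS} but now with the richer block structure. First I would insert $\Q$ from \eqref{eq:RSB of Q} into $\bar{\mathcal{E}}=-\lim_{n\to0}\frac{1}{n}\mathrm{Tr}[\Q R(-\beta\Q)]$ and exploit that $R(-\beta\Q)$, being a function of $\Q$, is simultaneously diagonalizable with $\Q$; hence $\mathrm{Tr}[\Q R(-\beta\Q)]=\sum_i \lambda_i R(-\beta\lambda_i)$, the sum running over the eigenvalues $\lambda_i$ of $\Q$ counted with multiplicity.

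The key step is diagonalizing the nested Parisi matrix \eqref{eq:RSB of Q}. Because the fine all-ones block $\1_{\mu_1/\beta}$ is nested inside the coarse all-ones matrix $\1_{n}$, the three terms share a common eigenbasis and produce three eigenvalue families: a single longitudinal eigenvalue $\lambda_{\ell}=q_1 n+(p_1\mu_1+b_1)/\beta$ of multiplicity $1$; an intermediate block eigenvalue $\lambda_{b}=(p_1\mu_1+b_1)/\beta$ of multiplicity $n\beta/\mu_1-1$; and a within-block eigenvalue $\lambda_{w}=b_1/\beta$ of multiplicity $n-n\beta/\mu_1$. Feeding these into $\sum_i\lambda_i R(-\beta\lambda_i)$, the arguments of $R$ become $-(p_1\mu_1+b_1)-\beta q_1 n$, $-(p_1\mu_1+b_1)$, and $-b_1$ respectively; the rescaling by $\sigma_u^2$ visible in the final formula is carried through by the $\sigma_u^{-2}$ factors in the definition \eqref{Correlation matrix} of $\Q$.

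Next I would carry out the $n\to0$ continuation. Expanding $R$ around the degenerate value, $R(-(p_1\mu_1+b_1)-\beta q_1 n)=R(-(p_1\mu_1+b_1))-\beta q_1 n\,R'(-(p_1\mu_1+b_1))+O(n^2)$, I would collect the three families. The $O(1/n)$ pieces generated by $\lambda_\ell$ and by the $-1$ in the block multiplicity cancel, leaving a finite $O(n)$ coefficient; dividing by $n$, sending $n\to0$, and then letting $\beta\to\infty$ (which removes the residual $b_1/\beta$ contribution) collapses the expression to the three surviving terms of \eqref{equ: The limiting energy for the LASSO estimator in RSB  }. The macroscopic variables $b_1,p_1,q_1,\mu_1$ are supplied as the solutions of the fixed-point/saddle-point equations \eqref{eq: macroscopic variables or the 1RSB ansatz1}--\eqref{the saddel point integration calculations-mu3}, i.e. the stationarity conditions behind Proposition \ref{prop: The limiting energy }; a useful consistency check is that putting $p_1\to0$ should recover the RS energy \eqref{eq: The limiting energy for the LASSO estimator in RS}.

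The hard part will be the bookkeeping of the analytic continuation: the multiplicities $n\beta/\mu_1-1$ and $n-n\beta/\mu_1$ and the block size $\mu_1/\beta$ must be treated as real continuation parameters rather than integers, and one must justify both the interchange of the $n\to0$ and $\beta\to\infty$ limits and the first-order expansion of $R$ about the degenerate eigenvalue. Establishing the stated convergence in probability as $N,M\to\infty$ additionally requires the self-averaging (concentration) of the free energy, which here is inherited from the decomposability assumption \eqref{eq:decomposiblity property} and the argument of \cite{BRAR}; this, together with verifying that $R(\Q)$ remains well-defined along the continuation, is where the genuine subtleties lie.
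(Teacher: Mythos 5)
Your proposal is correct in substance but takes a genuinely shorter route than the paper. The paper's Appendix D never invokes the closed-form trace expression \eqref{eq:the limiting energy} directly: it goes back to the replicated free energy $\Xi_{n}$, inserts the 1RSB ansatz for both $\Q$ in \eqref{eq:RSB of Q} and the conjugate matrix $\tilde{\Q}$, computes $\mbox{Tr}(\tilde{\Q}\Q)$ explicitly, evaluates $\mathcal{G}$ through the four eigenvalue families of $\L(n)$ (degeneracies $1$, $n\beta/\mu_{1}-1$, $n-n\beta/\mu_{1}$, $N-n$), decouples $\log M$ with a nested Hubbard--Stratonovich transform, derives the saddle-point equations for $e_{1},g_{1},f_{1}$ and $b_{1},p_{1},q_{1},\mu_{1}$, and only then differentiates with respect to $n$, sends $n\rightarrow 0$ and $\beta\rightarrow\infty$, using the $\mu_{1}$-stationarity condition \eqref{the saddel point integration calculations-mu3} to cancel the $\log$ term together with the leftover integral of $R$. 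Your route --- diagonalizing the Parisi matrix (your three eigenvalue families and their multiplicities are exactly right) and feeding the spectrum into Proposition 1's formula with a first-order expansion of $R$ about the degenerate eigenvalue --- compresses all of that into a short analytic continuation; this is legitimate because the proposition, as stated, presupposes Proposition 1 and the fixed-point equations for the macroscopic variables, so you are not obliged to rederive them. What the paper's longer route buys is precisely those fixed-point equations (they are produced en route and are needed to specify $b_{1},p_{1},q_{1},\mu_{1}$) and the explicit handling of the $\beta$-scaling of the ansatz entries; what your route buys is transparency about where each of the three surviving terms comes from. One caveat: if you carry out your own suggested check $p_{1}\rightarrow 0$, your continuation yields a last term $-q_{1}\frac{b_{1}+\mu_{1}p_{1}}{\sigma_{u}^{4}}R'\bigl(\frac{-b_{1}-\mu_{1}p_{1}}{\sigma_{u}^{2}}\bigr)$, consistent with the RS result \eqref{eq: The limiting energy for the LASSO estimator in RS}, whereas the proposition as printed has $+q_{1}\frac{b_{1}+\mu_{1}p_{1}}{\sigma_{u}^{2}}R'(\cdot)$; the mismatch in sign and in the power of $\sigma_{u}$ traces to the final algebraic lines of the paper's appendix rather than to your argument, but you should state explicitly which sign convention for $\Q$ in \eqref{Correlation matrix} you adopt, since that fixes whether the prefactor in \eqref{eq:the limiting energy} appears as $+1/n$ or $-1/n$ once everything is written in terms of nonnegative $q_{1},p_{1},b_{1}$.
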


\begin{proof}
See Appendices D.

\end{proof}

%%%%%%%%%%%%%%%%%%%%%%%%%%%%%%%%%%%%%%%%%

\subsection{  Zero-Norm regularizing estimator with 1RSB ansatz}\label{sec:Zero-Norm regularizing estimator with 1RSB ansatz}
The LASSO estimation is considered as the convex relaxation of the Zero-Norm regularizing estimation. Since the latter is a non-convex problem its performance is better evaluated when we use RSB ansatz. 
%Actually the RS ansatz result in the above two propositions may have closer solutions than these two last propositions. 
So extending proposition \eqref{prop: The limiting energy for the LASSO estimator in RSB  }  to this estimator we get the following statement.

\begin{proposition}
\emph{}
\label{prop: Zero-Norm regularizing estimator with 1RSB ansatz}
Given the Zero-Norm regularizing estimator in \eqref{Zero Norm_MAP} and suppose the random matrix $\J$ satisfies the decomposability property \eqref{eq:decomposiblity property}. Then under some technical assumptions, including one-step replica symmetry breaking, the  effective energy penalty per component converges in probablity as $N$, $M \rightarrow \infty$, $N/M < \infty$ , to  
\begin{align}\label{equ:Zero-Norm regularizing with 1RSB ansatz}
\mathcal{ \bar{E}_{\mbox{1rsb}}^{\mbox{zero-norm}} }&=\scriptstyle{\frac{  1 } { \sigma_{u}^2} (q_{1} + p_{1} +\frac{b_{1}}{\mu_{1}})R(\frac{-b_{1}-\mu_{1} p_{1}  } { \sigma_{u}^2})  -\frac{b_{1}}{\mu_{1}\sigma_{u}^2}R(-\frac{b_{1}} {\sigma_{u}^2}) }  \nonumber\\ &  \scriptstyle{ +   q_{1} (\frac{b_{1}+\mu_{1} p_{1}  } { \sigma_{u}^2} )  R'(\frac{-b_{1}-\mu_{1} p_{1}  } { \sigma_{u}^2})     }  
\end{align}
%\mathcal{ \bar{E}_{\mbox{1rsb}}^{\mbox{zero-norm}} }&=\frac{  q_{1} } { \sigma_{u}^2} R(\frac{-b_{1}-\mu_{1} p_{1}  } { \sigma_{u}^2})   + \frac{1}{\mu_{1}}\int_{0}^{\frac{b_{1}+\mu_{1} p_{1}}{\sigma_{u}^2}  }R(-w)dw -  \frac{ 1}{\mu_{1}}\int_{0}^{\frac{b_{1}}{\sigma_{u}^2}  }R(-w)dw  \nonumber  \\&  -\underset{\beta \rightarrow \infty }{\operatorname{lim }} \frac{1}{ \beta} \Biggl\{\frac{\beta}{\mu_{1}}\int \int \log  \int_{\mathbb C}  \int_{\mathbb C}   \Biggl(  \sum\limits_{\{\x \in \chi \}} \mbox{$\cal{K}$} \mbox{( $x$, $y$, $z$)}  \Biggr) ^{\frac{\mu_{1}}{\beta}}  \mbox{$Dy$}   \mbox{$ Dz  dP_{a}(\x^{a})dP_{0}(\x^{0}) $} \Biggr \}  \\
\end{proposition}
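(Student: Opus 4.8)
The plan is to run the argument of Appendix D — the derivation behind the LASSO one-step result (Proposition \ref{prop: The limiting energy for the LASSO estimator in RSB  }) — almost verbatim, exploiting a structural feature of the master formula \eqref{eq:the limiting energy}. The key observation is that in $\bar{\mathcal{E}}=-\lim_{n\to 0}\frac{1}{n}\,\mathrm{Tr}[\Q R(-\beta\Q)]$ the penalty $f$ does \emph{not} appear explicitly; the estimator enters only through the matrix $\Q$, which is pinned down by the saddle-point equation of Proposition \ref{prop: The limiting energy } (there $f$ sits inside the exponential weight). Hence, once the one-step replica-symmetry-breaking ansatz \eqref{eq:RSB of Q} is substituted for $\Q$, the evaluation of the trace and of the $n\to 0$ limit is a purely linear-algebraic computation depending only on the block structure of $\Q$ and on the R-transform $R(\cdot)$, and is therefore \emph{identical} for $f(x)=\|x\|_1$ and $f(x)=\|x\|_0$. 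This is exactly why the asserted expression \eqref{equ:Zero-Norm regularizing with 1RSB ansatz} coincides in form with the LASSO energy \eqref{equ: The limiting energy for the LASSO estimator in RSB  }: the two estimators differ only in the numerical values of $b_1,p_1,q_1,\mu_1$ that solve their respective fixed-point equations.

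Concretely, I would first substitute \eqref{eq:RSB of Q} into \eqref{eq:the limiting energy}, resolve $\Q$ into its three-level block (eigenvalue) structure, and evaluate $\mathrm{Tr}[\Q R(-\beta\Q)]$ as in the LASSO case; sending $\beta\to\infty$ and $n\to 0$ reproduces the right-hand side of \eqref{equ:Zero-Norm regularizing with 1RSB ansatz} as a function of the four macroscopic variables. Second, I would return to the free-energy term $\Xi_{n}$ in \eqref{eq:R-transform without zero order} together with the subshell decomposition \eqref{subshels}, and make the single change $f(x)=\|x\|_1\mapsto f(x)=\|x\|_0$ in the compact exponent \eqref{eq: compact exponential argument}. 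Since for a scalar $\|x\|_0$ is the indicator $\mathbf{1}[x\neq 0]$, this alters only the inner minimization defining $\Psi_{2}$ and the weight $\Delta(y,z)$, leaving the Gaussian measures $Dy,Dz$, the outer saddle point, and the trace term untouched. Carrying the saddle-point integration and the $n\to 0$, $\beta\to\infty$ limits through then yields the zero-norm analogues of the fixed-point equations, while the energy itself retains the stated form.

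The main obstacle will be justifying the saddle-point evaluation for a penalty that is neither convex nor continuous. With $f(x)=\|x\|_0$ the objective inside $\Psi_{2}$ has a jump at $x=0$, so its minimizer need not be unique and may depend discontinuously on $(y,z)$; one must show that the set of $(y,z)$ at which ties occur is $Dy\,Dz$-null, so that the fixed-point integrals remain well defined, and that the zero-temperature factor $e^{-\mu_1\min(\cdots)}$ concentrates cleanly despite a possibly disconnected minimizing set. Beyond this, the derivation rests — as throughout the paper — on the self-averaging of the eigenvalue spectrum of $\J$, on the admissibility of interchanging the $N\to\infty$, $n\to 0$ and $\beta\to\infty$ limits, and on the assumption that a single step of symmetry breaking already captures the true structure of $\Q$. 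Because the zero-norm problem is genuinely non-convex, establishing (rather than positing) the local stability of the 1RSB saddle is precisely the point that a fully rigorous treatment would have to supply.
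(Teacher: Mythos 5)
Your proposal is correct and follows essentially the same route as the paper: Appendix D carries out the 1RSB saddle-point computation once (for the LASSO weight), and the zero-norm result is obtained exactly as you describe, by swapping the penalty inside the exponential weight so that only the inner minimization and hence the macroscopic fixed-point equations change, while the trace/energy expression in terms of $q_{1}$, $p_{1}$, $b_{1}$, $\mu_{1}$ and $R(\cdot)$ retains the identical form. Your added remarks on the discontinuity of the zero-norm penalty and the stability of the 1RSB saddle go beyond what the paper verifies, which simply subsumes these points under its ``technical assumptions.''
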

\begin{proof}
See Appendix D.

\end{proof}

%%%%%%%%%%%%%%%%%%%%%%%%%%%%%%%%%%%%%%%%%%%%%%%%%%%%%%%%%%%%%%%%%%%%%%%%%%%%%%%%%%%%%%%%%%%%
%&&&&&&&&&&&&&&&&&&&&&&&&&&&&&&&%%%%%%%%%%%%%%%%%%%%%&&&&&&&&&&&&&&&&&&&&&%

\section{  Particular Example: Bernoulli-Gaussian Mixture Distribution} \label{sec: Particular Example}

Assume the original vector $\x^{0} \in \mathbb{R}^{N}$ follows a Bernoulli-Gaussian mixture distribution. So following the Bayesian framework analysis in section \ref{sec: Analysis}, let $\x$ be composed of  random variables with each component obeying the pdf 
\begin{equation}\label{equ: Bernolli_Gaussian dist.}
p(x) \sim \left\{ 
\begin{array}{l l}
  \mathcal{N}(0,1) & \quad \mbox{with probability $\rho$}\\
  0 & \quad \mbox{with probability $1-\rho$,}\\ 
\end{array} \right. 
\end{equation}
where $\rho=k/n$, with $k$ being the number of non zero entries of $\x$.  With out loss of generality,  let $\rho=0.1$,  $M/N$ and $k/N$ vary between $0.2$ and $1$.  Also lets assume that the entries of the measurement  matrix $\A$ follow i.i.d. Gaussian random variable of mean zero and variance 1/M. In addition let $\sigma_{u}^2$ be such that the signal to noise ratio is $-10dB$.

We have simulated equations (2.7) and (2.8).  Figure 1 shows MSE  versus $M/N$ of the two estimators, where we se that the $l_{2}$ penalizing estimator, LMMSE, is not as good as the $l_{1}$ penalizing estimator in general.  Figure 2 shows MSE  versus $k/N$ of the two estimators and we see that LMMSE is not sensitive to the sparsity of the vector as compared to the $l_{1}$ penalizing estimator. Note that we have plotted the $l_{1}$-penalizing estimator using different algorithms: LASSO, L1-LS, Log-Bar. 

\begin{figure}[!htbp]
%\begin{center}
\centering
\includegraphics[height= 0.30\textheight, width=0.80\textwidth]{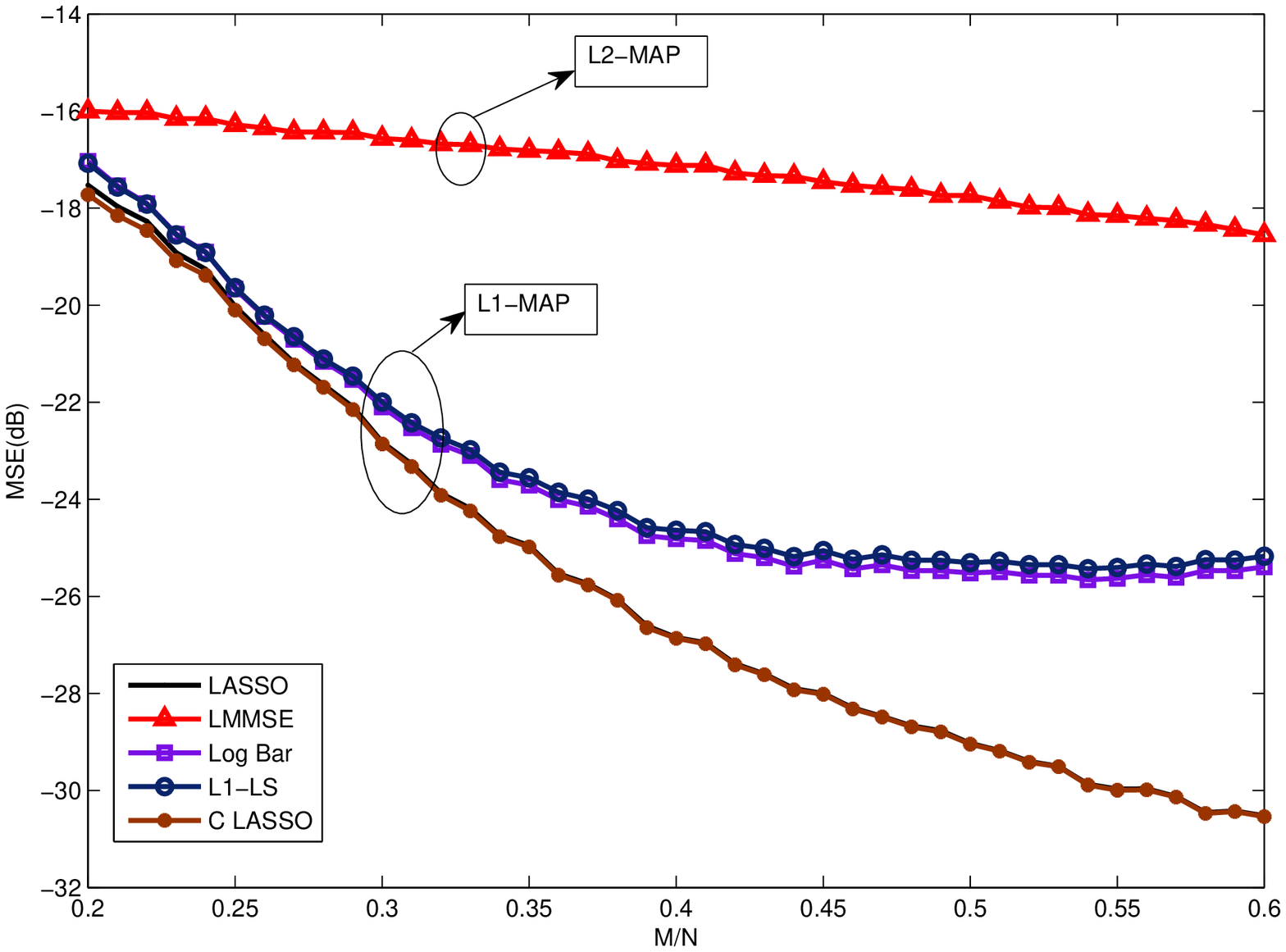}
\caption{\small This figure shows the the normalized mean squared error for the different eastimators in (2.7) and (2.8) versus measurment ration M/N simulated using different algorithms like LASSO, LOG-BAR, L1-LS as L1 penalazing family and LMMSE for the the L2 penalayizing.
\label{fig: the limitting Energy comparison}}
%\end{center}
\end{figure}

\begin{figure}[!htbp]
%\begin{center}
\centering
\includegraphics[height= 0.30\textheight, width=0.80\textwidth]{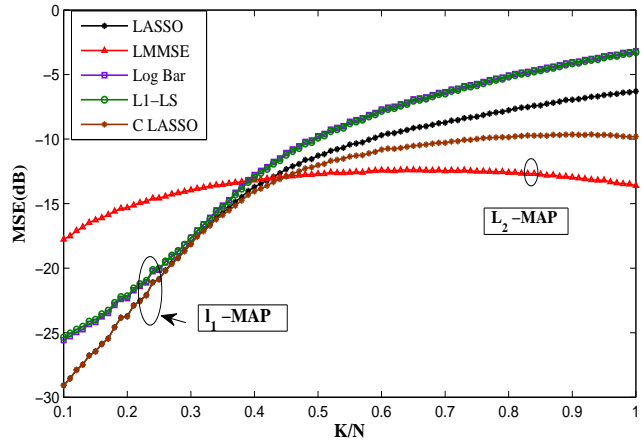}
\caption{\small This figure shows the the normalized mean squared error for the different eastimators in (2.7) and (2.8) versus sparsity ratio k/N simulated using different algorithms like LASSO, LOG-BAR, L1-LS as $l_{1}$ penalazing family and LMMSE for the the $l_{2}$ penalayizing for M=50 and N=100.
\label{fig: the limitting Energy comparison}}
%\end{center}
\end{figure}

\begin{figure}[!htbp]
%\begin{center}
\centering
\includegraphics[height= 0.35\textheight, width=0.80\textwidth]{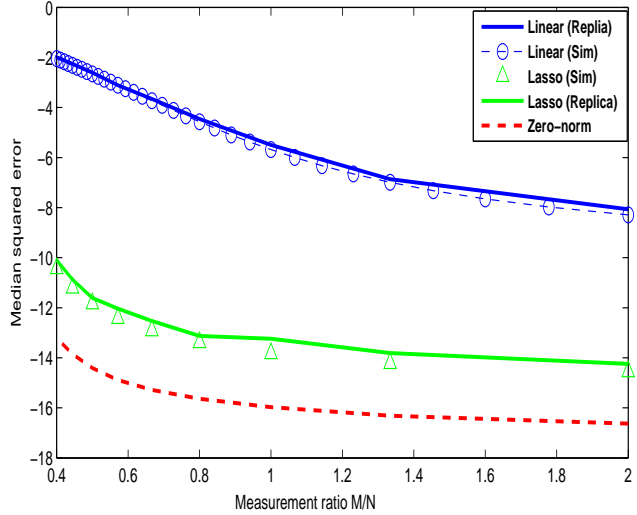}
\caption{\small This figure shows the the  Median squared error against measurment ratio for the eastimators in (2.7)-(2.9) as simulated by Rangan and others \cite{MAP} ploted against M/N instead of N/M and the replica simulation points are included.
\label{fig: the limitting Energy comparison}}
%\end{center}
\end{figure}
%
%
%
% For least square and lasso we used 1l ls ( l1-regularized least squares problems) \cite{l1 ls} and for zero norm estimator we used AITH ( accelerated iterative hard thresholding) algorithm \cite{AITH}.  
In both figures, we see that the least square estimator is not good for the compressive sensing problem. In addition, we also observed that simulating the $ l_{0} $ penalizing estimator is hard. However, it is possible to apply statistical physics tools, including replica methods, to analayze the performances of all the estimators mentioned above, including zero norm estimator. %Even though the zeronorm regularization is the best solution for the CS problem we see that the lasso estimator which is the l1 regularazing performes better in the sense of mean square error.  The reason why the zero-norm estimator is not performing as the lasso estimator is due to the efficiency of the algorithm we used. 
In \cite{MAP}, median square error was used to compare the different estimators given by \eqref{Linear_MAP}-\eqref{Zero Norm_MAP} as shown here in figure 3. What we do here is that we include 1RSB ansatz analysis of the performance of the CS estimators as each of them are presented here as a MAP estimator. Actually it is one of the conjuctures made by M\"{u}ller and others that the performance of MAP estimators is best done using one step RSB. And we showed it here via the minimized energy expressions as given in the propositions by the equations \eqref{sec:  LASSO estimator with RS ansatz},
%\eqref{eq: The limiting energy for the Zero-Norm estimator in RS}, 
\eqref{equ: The limiting energy for the LASSO estimator in RSB  }, and \eqref{equ:Zero-Norm regularizing with 1RSB ansatz}.
%%%%%%%%%%%%%%%%%%%%%%%%%%%%%%%%%%%%%%%%%%%%%%%%%%%%%%%%%

\subsection{ Replica symmetry analysis} \label{RS analysis for the example}

Considering the macroscopic variables given by \eqref{eq:macroscopic_q02} and \eqref{eq:macroscopic_b02} and pluging the assumed distributions above and simplyfying it one more step, the fixed point equations become
%\begin{align}
% q_{0}&=\frac{\rho^2}{(2\pi)^{3/2}}\int_{\mathbb{R}} \int_{\mathbb{R}} \int_{\mathbb{R}}   \Bigl| x^{0}- \Psi_{1} \Bigl|^2  e^{-\frac{x^2+{x^0}^2+z^2}{2}} dz  d\rm{x} dx^{0},\\
%b_{0}&=\frac{\rho^2}{(2\pi)^{3/2}}\frac{1}{f_{0}}  \int_{\mathbb{R}} \int_{\mathbb{R}} \int_{\mathbb{R}} \Re \Bigg\{x^{0}- \Psi_{1} z^{*}\Bigg\}  e^{-\frac{x^2+{x^0}^2+z^2}{2}}  dz  d\rm{x} dx^{0},
%\end{align}
\begin{align}
 q_{0}&=\frac{\rho^2}{2\pi}\int_{\mathbb{R}} \int_{\mathbb{R}}   \Bigl| \frac{z f_{0} + \frac {\gamma} {\sigma_{u}^2}}{2e_{0}} \Bigr|^2  e^{-\frac{{x^0}^2+z^2}{2}} dz  dx^{0},\\
b_{0}&=\frac{\rho^2}{2\pi}\frac{1}{f_{0}}  \int_{\mathbb{R}} \int_{\mathbb{R}}  \Re \Bigg\{  x^{0}(1- z^{*}) +\Bigl(\frac{z f_{0} +\frac {\gamma} {\sigma_{u}^2}}{2e_{0}} \Bigr) z^{*}\Bigg\}  e^{-\frac{{x^0}^2+z^2}{2}}  dz  dx^{0}.
\end{align}
Using these macroscopic variables  in we find  the  limiting energy numerically which is given under propostion \ref{prop: The limiting energy for the LASSO estimator in RS} and the result is shown in figure \ref{fig: the limitting Energy comparison}. 
%
%We plotted these  macroscopic variables including the ones in  \eqref{eq:macroscopic_e0} and \eqref{eq:macroscopic_f0} in figure 4 and using these also we find  the  limiting energy numerically which is given under propostion \ref{prop: The limiting energy for the LASSO estimator in RS} and  \ref{prop: The limiting energy for the Zero-Norm estimator in RS  }. ...  (\textcolor{blue}{ here comes the conclusion from the RS simulation})
%

%And the result is shown in figure 5 and 6. As shown figure 5 the RS solution for the $l_{1}$ penalized estimator is stable while the RS solution for the  $l_{0}$ penalized solution is not.  Therefor, we need to do the one step RSB solution. 

%\begin{figure}[h!]
%%\begin{center}
%\centering
%\includegraphics[height= 0.30\textheight, width=1.00\textwidth]{4RSMacroparametres.eps}
%%\includegraphics[scale=0.5]{E_lasso_gamma_a1}
%\caption{\small This figure shows the macroscopic variables resulting from the RS ansatz.
%\label{fig: the limitting Energy comparison}}
%%\end{center}
%\end{figure}
\begin{figure}[h!]
%\begin{center}
\centering
\includegraphics[height= 0.35\textheight, width=0.85\textwidth]{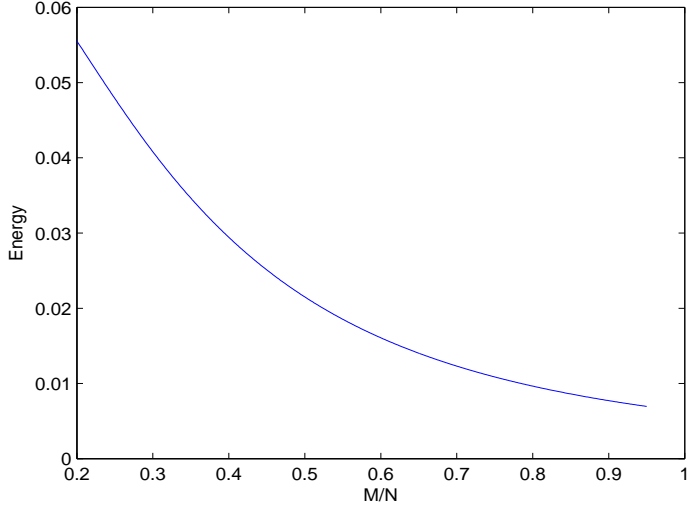}
\caption{\small This figure shows the minimum energy for the error resulting from the RS ansatz for lasso  versus the measurment ratio M/N.
\label{fig: the limitting Energy comparison}}
%\end{center}
\end{figure}

%\begin{figure}[h!]
%%\begin{center}
%\centering
%%\includegraphics[scale=0.5]{MoverN.eps}
%\includegraphics[height= 0.35\textheight, width=0.85\textwidth]{MinimumEnergyvsMN.eps}
%%\includegraphics[scale=0.5]{E_lasso_gamma_a1}
%\caption{\small This figure shows the minimum energy resulting from the RS ansatz for lasso and zero norm estimators versus k/N,
%\label{fig: the limitting Energy comparison}}
%%\end{center}
%\end{figure}

%%%%%%%%%%%%%%%%%%%%%%%%%%%%%%%%%%%%%%%%%%%%%%%%%%%%%%%%%%%%%%%%%

\subsection{ Replica symmetry Breaking analysis} \label{RSB analysis for the example}

Considering the same Bernoulli-Gaussian  mixture distribution \eqref{equ: Bernolli_Gaussian dist.}  assumed in this section we consider the macroscopic variables which arises from one step replica symmetry breaking (1RSB)  ansatz. Then the minimum energy per component as $M  \rightarrow \infty, N  \rightarrow \infty$,  while $M/N$ is finite ratio,  which are given by \eqref{equ: The limiting energy for the LASSO estimator in RSB  } and \eqref{equ:Zero-Norm regularizing with 1RSB ansatz} are  dependent up on four macroscopic variables given by \eqref{eq: macroscopic variables or the 1RSB ansatz1} -\eqref{the saddel point integration calculations-mu3}.  The ther first are simplified further as follows:  

We can further simplify \eqref{eq: macroscopic variables or the 1RSB ansatz1}-\eqref{the saddel point integration calculations-mu3} as follows
\begin{align}\label{eq: macroscopic variables or the 1RSB ansatz12}
b_1 + p_1\mu_1 &=  \frac{1}{f_1} \int \int \int_{\mathbb{C}^2} \Re \Big\{ \Bigl( x^{0} - \Psi_{2} \Bigr)z^* \Big\} \mbox{$D y  Dz  dF_{X}(x)dF_{X^{0}}(x^{0}) $}\\
b_1 +(q_1 +p_1)\mu_1 &=  \frac{1}{g_1} \int \int \int_{\mathbb{C}^2} \Re \Big\{ \Bigl( x^{0} - \Psi_{2}\Bigr)z^* \Big\}   \mbox{$D y  Dz  dF_{X}(x)dF_{X^{0}}(x^{0}) $}
\end{align}
\begin{align}\label{eq: macroscopic variables or the 1RSB ansatz21}
q_1 +p_1&=  \frac{1}{g_1} \int \int \int_{\mathbb{C}^2}  | \Psi_{2} |^2  \mbox{$D y  Dz  dF_{X}(x)dF_{X^{0}}(x^{0}) $}
\end{align}

It is possible to simplify these results further and give numerical results. But this is deferred for further work. We expect that the free energy from The RSB ansatz to be greater than the free energy from the RS ansatz for the Zero-Norm regularizing, which can be seen from the analytical terms which have more parameters in \eqref{equ:Zero-Norm regularizing with 1RSB ansatz}. However, for LASSO these free energy, hence the energy error, will be quite similar since for convex minimization problems there is one global minimum and RS ansats is sufficient enough to produce the solution. 
\section{  Conclusion} \label{sec: Conclusion}

In this paper we have used the replica method to analyze the performance of the estimators used in compressed sensing which can be generalized as MAP estimators. And the performance of MAP estimators can well be shown using  replica method including one-step replica breaking ansatz. It is a philosophical standpoint that 1RSB enough to analyze the estimators like MAP. We have only showed here for one particular example for the CS problem, i.e. for Bernoulli-Gaussian distribution. One may be interested to verify it using different examples. In addition we have only compared the estimators performance based on the free energy, but one can also use other metrics such as comparing the input/out put distribution using replica analysis as it is done in \cite{BRAR}. The main result of this paper is analytical analysis for the performance of the estimators used in CS and many things can be extended including efficient algorithms in implementing the numerical analysis. 
\section{Acknowledgments}

We are grateful to Lars  Lundheim, Rodrigo Vicente de Miguel  and Benjamin M. Zaidel for interesting discussions and suggestions.  
%%%%%%%%%%%%%%%%%%%%%%%%%%%%%%%%%%%%%%%%

\appendix
\section{Important Definitions } \label{Important Definitions }
\subsection{Green's function}
In Classical probability theory (CPT) one is concerned with the densities, moments and comulants of elements of  random matrices. Where as in Random matrix theory (RMT)  also called (Free Random Variable calculus), one is engaed in finding the spectral densities, moments and cumilants ( By Professor Maciej A. Novak).  As Fourier transfom is the generating function for the moments in CPT, Green's function ( also called Stieltjes transform) is the generating function for the spectral moments defined as 
\begin{equation}\label{def: definition of Greens function}
G(z)\equiv \frac{1}{N}  \langle  \mbox{Tr} \frac{1}{z\1_{N } -\X } \rangle \equiv \int \frac{\rho(\lambda) }{z-\lambda}d \lambda \equiv \sum\limits_{n=0}^{\infty} \frac{1}{z^{n+1}}M_{n},
\end{equation}
where $\X$ is $N \times N$ random matrix and  $\1_{N }$ is of the same size unit matrix,  $\lambda$ are the eigenvalues, and $M_{n}$ is the spectral moment. The integral is over the support set of the eigenvalues.
\subsection{R-transform}
The generating function for the cumulants of the CPT is given by the logarithm of the Fourier transfom. In similar maner to the above section we can define the generating function for spectral cumulants. It is called the R-transform (Voiculescu,1986). It is given by 

\begin{equation}\label{def: definition of R-transform}
R(z)\equiv  \sum\limits_{n=1}^{\infty} C_{n}z^{n-1},
\end{equation}
where $C_{n}$ are the spectral cumulants of the random matrix $\X $. We can relate R-transform with Greens's function as follows: 
\begin{equation}\label{def: definition of Greens function}
G(R(z)+\frac{1}{z}) =z  \hspace{5mm} \mbox{or} \hspace{5mm} R(G(z)) + \frac{1}{G(z)} =z.
\end{equation}
The spectral density of the matrix $\J=\A^{T}\A$ converges almost surely to the Marchenko-Pastur law  as $M=\alpha N \rightarrow \infty$ \cite{RGM}.  And the R-transform of this matrix is given by 

\begin{equation}\label{def: definition of R-transform}
R(z)= \frac{1}{1-\alpha z}
\end{equation}
and its derivative with respect to z becomes 
\begin{equation}\label{def: definition of R-transform}
R'(z)= \frac{\alpha}{(1-\alpha z)^2},
\end{equation}
where $\alpha=N/M$ is system load.
\section{Proof of propostion \ref{prop: The limiting energy } } \label{app: Prof of propostion The limiting energy }

The avarage energy penality  can be derived from the average free energy given in \eqref{eq:average free energy} 
\begin{align}\label{eq:avarage energy penality}
\bar{\mathcal{E}}&= \underset{\beta \rightarrow  \infty}{\operatorname{lim }} \frac{1}{ \beta}\bar{\mathcal{F}}=- \underset{\beta \rightarrow  \infty}{\operatorname{lim }} \frac{1}{ \beta} \underset{N \rightarrow  \infty}{\operatorname{lim }} \frac{1}{ N}   \underset{\bf{n},\bf{R}}   {\operatorname{E }} \{\log\hspace{1mm} \mathcal{Z} \}  \nonumber \\&=-\underset{\beta \rightarrow  \infty}{\operatorname{lim }} \frac{1}{ \beta}\underset{n \rightarrow 0}{\operatorname{lim }}    \frac{\partial}{ \partial n} \underbrace{  {\rm   \underset{N \rightarrow  \infty}{\operatorname{lim }} \frac{1}{ N}  \log  \underset{\bf{n}, \bf{J} }{\operatorname{E }}\{ ( \mathcal{Z} )^{n}\} } }_{\rm  \Xi_{n}} .
\end{align}
where $\Xi_{n}$ is given by \eqref{eq:R-transform without zero order}. 
Using \eqref{subshels} as the splitting of the space, we get 
 \begin{equation} \label{eq: integration over the changed variable}
 \Xi_{n} = \underset{N \rightarrow \infty}{\operatorname{lim }} \frac{1}{N} log \int_{\mathbb{R}^{(n+1)^2}} e^{N\mathcal{L}} e^{N\mathcal{I}\{\Q \}}e^{-N\mathcal{G}\{\bf{Q}\}}D\bf{Q}
     \end{equation}
where 
\begin{equation}
%D\bf{Q}=\prod_{a =0}^{n} d\bf{Q_{aa}}\prod_{a=1}^{n} d\bf{Q_{0a}}\prod_{a \neq b}^{n} d\bf{Q_{ab}}
D\Q=\prod_{a=0}^{n} dQ_{aa}\prod_{b=a+1}^{n} dQ_{ab}
\end{equation}
is the integration measure, 
\begin{align}
\mathcal{G}(\bf{Q})&= \sum\limits_{a=0}^{n}\int_{0}^{\frac{ \beta \gamma} {\sigma_{u}^2} \lambda_{a}(  \Q )}R(-w)dw \\
&= Tr  \int_{0}^{\frac{ \beta \gamma} {\sigma_{u}^2}  \Q }R(-w)dw\\&=\int_{0}^{  \frac{ \beta \gamma} {\sigma_{u}^2}}Tr[\Q R(-w \Q)]dw
 \end{align}
\begin{equation}
\mathcal{L}=-\frac{\beta \gamma}{2N} \sum\limits_{a=0}^{n}  f(\bf{x}^{a}) \hspace {5mm}   and  \end{equation}
 \begin{equation}
e^{N\mathcal{I} \{ \Q\}}=\sum\limits_{\{\bf{x}^{a}\}}\prod_{a=0}^{n} \delta \bigl(   (\x^{0}  -  \x^{a})^T( \x^{0}  -  \x^{a}) - NQ_{aa} )  \bigr)   \prod_{b=a+1}^{n}  \delta   \bigl(   (\x^{0}  -  \x^{a})^T( \x^{0}  -  \x^{b}) - NQ_{ab}   \bigr)
\end{equation}
denotes probability weight of the subshell composed of Dirac-functions in the real line. This procedure is a change of integration variables in multiple dimensions where the integration of an exponential function over the replicas has been replaced by integration over the variables $\Q$. To evaluate $e^{NC}e^{N\mathcal{I}\{Q\}}$ we follow  \cite{Ralf}, \cite{BRAR} and represent the Dirac measure using the Fourier transform as
\begin{equation}
\delta  \Bigl((\x^{0}  -  \x^{b})^T( \x^{0}  -  \x^{a}) - NQ_{ab}\Bigr) =\int_{\mathcal{J}} e^{    \tilde{Q}_{ab}    \Bigl((\x^{0}  -  \x^{b})^T( \x^{0}  -  \x^{a}) - NQ_{ab}   \Bigr) }  \frac{d\tilde{Q}_{ab}}{2\pi},
\end{equation}
where $ a,b=0,1, \cdots, n$ and this gives

  \begin{align}\label{eq: Fourier transform of dirac}
e^{N\mathcal{L}}e^{N\mathcal{I}\{\Q\}}
&=\sum\limits_{\{\bf{x}^{a}\}}\int_{\mathcal{J}^{n^2}} e^{    \sum\limits_{a,b} \tilde{Q}_{ab} \Bigl( (\x^{0}  -  \x^{b})^T( \x^{0}  -  \x^{a}) - NQ_{ab}\Bigr)  }e^{  \frac {-\beta\gamma} {\sigma_{u}^2}  \sum\limits_{a=1}^{n}  f(\bf{x}^{a})} \tilde{D}\tilde{\Q} \nonumber \\
&=\int_{\mathcal{J}^{n^2}}e^{  -N Tr ( \tilde{\Q}\Q) }\Biggl( \sum\limits_{\{\bf{x}^{a}\}} e^{    \sum\limits_{a,b} \tilde{Q}_{ab} (\x^{0}  -  \x^{b})^T( \x^{0}  -  \x^{a})  }e^{  \frac {-\beta\gamma} {\sigma_{u}^2}  \sum\limits_{a=1}^{n}  f(\bf{x}^{a})}\Biggr) \tilde{D}\tilde{\Q}
\end{align}
where
 \begin{equation}
\tilde{D}\tilde{\Q}=\prod_{a=0}^{n}\Biggl(\frac{d\tilde{Q}_{aa}}{2\pi}\prod_{b=a+1}^{n}\frac{d\tilde{\Q}_{ab}}{2\pi} \Biggr) 
\end{equation}
Assuming  $ f({\x}^{a})=\| {\x}^{a} \|_{1}=    \sum\limits_{i=1}^{N}  |x_{i}^{a} |   $ , which is the sparsity enforcer as described above in LASSO estimator, and after doing some rearrangements, the inner expectation of \eqref{eq: Fourier transform of dirac}  can be given by
\begin{align}
 \sum\limits_{\{\bf{x}^{a}\}}e^{    \sum\limits_{a,b} \tilde{Q}_{ab} (\x^{0}  -  \x^{b})^T( \x^{0}  -  \x^{a}) }e^{  \frac {-\beta\gamma} {\sigma_{u}^2}  \sum\limits_{a=1}^{n}  f(\bf{x}^{a})}
=\prod_{i=1}^{N} \sum\limits_{\{{x}_{i}^{a}\in \chi \}} e^{  (  \sum\limits_{a,b} \tilde{Q}_{ab} ({x}_{i}^{0}  -  {x}_{i}^{b})^T({x}_{i}^{0}  -  {x}_{i}^{a})) - \frac {\beta\gamma} {\sigma_{u}^2}   \sum\limits_{a=1}^{n} \lvert  x_{i}^{a}  \rvert    }
\end{align}
Now defining 
\begin{equation} \label{sum of the exponentials}
M_{i} (\tilde{\Q})= \sum\limits_{\{{x}_{i}^{a}\in \chi \}} e^{  \bigl(  \sum\limits_{a,b} \tilde{Q}_{ab} ({x}_{i}^{0}  -  {x}_{i}^{b})^T({x}_{i}^{0}  -  {x}_{i}^{a})\bigr) - \frac {\beta\gamma} {\sigma_{u}^2}   \sum\limits_{a=1}^{n} \lvert  x_{i}^{a}  \rvert    }
\end{equation} 
we can get  
\begin{equation}
e^{N\mathcal{L}}e^{N\mathcal{I}\{\Q\}}=\int_{\mathcal{J}^{n^2}}e^{  -N Tr ( \tilde{Q}Q)+ \sum\limits_{i=1}^{N} \log M_{i} (\tilde{Q})  } \tilde{D}\tilde{Q}.\end{equation}
Following the i.i.d. assumption for the component of the sparse vector $\x$,  and applying the strong law of large numbers as $N \rightarrow \infty$ we get 
\begin{align} \label{eq:applying law of large numbers}
\log M (\tilde{\Q})
&=\frac{1}{N}\sum\limits_{i=1}^{N} \log M_{i} (\tilde{\Q})  \nonumber\\
& \rightarrow  \int \log  \sum\limits_{\{{x}^{a}\in \chi \}} e^{ \sum\limits_{a,b} \tilde{Q}_{ab} ({x}^{0}  -  {x}^{b})^T({x}^{0}  -  {x}^{a}) - \frac {\beta\gamma} {\sigma_{u}^2}   \sum\limits_{a=1}^{n} |x^{a}|   } \prod_{a=0}^{n}dF_{X}(x^{a}) \nonumber\\
%&= \int \log   \sum\limits_{\{\x \in {\chi}^n \}} e^{   (x^{0}\1- \tilde{\x})^{T}\tilde{\Q} (x^{0}\1-  \tilde{\x}) - \frac {\beta\gamma} {\sigma_{u}^2}   \sum\limits_{a=1}^{n} \lvert  x^{a}  \rvert }  \prod_{a=0}^{n} dF_{X^{0}}(\x^{0})\\
&= \int \log   \sum\limits_{\{\x \in {\chi}^n \}} e^{   (x^{0}\1- \tilde{\x})^{T}\tilde{\Q} (x^{0}\1-  \tilde{\x}) - \frac {\beta\gamma} {\sigma_{u}^2}\tilde{\x}}  \prod_{a=0}^{n} dF_{X^{0}}(x^{0})
\end{align}
where, $\tilde{\x}$ is vector of dimention $n$. Next we apply the saddle point integration concept on  the remaining part of \eqref{eq: integration over the changed variable}, i.e.,  as $N \rightarrow \infty $ the integrand will be dominated by the exponential term with maximal exponent.  Hence in \eqref{eq: integration over the changed variable} only the subshell that corresponds to this extremal value of the correlation between the vectors $\{\x^{a}\}$ is relevant for the calculation of the integral.
  \begin{align}\label{saddel point integration place}
&\int_{\mathbb{R}^{n^2}} e^{N\mathcal{L}} e^{N\mathcal{I}\{\bf{Q}\}}e^{-N\mathcal{G}(\bf{Q})}D\Q  \nonumber\\
&=\int_{\mathbb{R}^{n^2}} \Biggl( \int_{\mathcal{J}^{n^2}}e^{  -N  \mbox{Tr}(\tilde{\Q}\Q)+ \sum\limits_{i=1}^{N} \log M_{i} (\tilde{\Q})  } \tilde{D}\tilde{Q}\Biggr)e^{-N\mathcal{G}\{\bf{Q}\}}D\Q
\end{align}
Therefore, at the saddle point we have the following equations with partial derivatives being zero (see the proof in Appendix B of \cite{BRAR}): 
\begin{equation} \label{partial of G(Q) and trace of Q times  Q tilde}
 \frac{\partial}{\partial \Q }\Bigl[\mathcal{G}(\Q) +  \mbox{Tr}(\tilde{\Q}\Q)\Bigr]=\0   \hspace{7mm}   and \end{equation}
\begin{equation}\label{partial of log M and trace of Q times  Q tilde}
 \frac{\partial}{\partial \tilde{\Q}}\Bigl[ \log M (\tilde{\Q}) -  \mbox{Tr}(\tilde{\Q}\Q)\Bigr]=\0 .
\end{equation}
And from the former we get
 \begin{equation}
 \tilde{\Q}=\beta R(-\frac{ \beta \gamma} {\sigma_{u}^2} \Q)
\end{equation}
and from the later, using \eqref{eq:applying law of large numbers} we finally get
\begin{equation}
\Q =  \int \scriptstyle{  \frac{  \sum\limits_{\{ \tilde{\x} \in{ \chi}^{n} \}}  (x^{0}\1-  \tilde{\x}) (x^{0}\1-  \tilde{\x})^{T}   e^{   (x^{0}\1-  \tilde{\x})^{T}\tilde{\Q} (x^{0}\1-  \tilde{\x}) - \frac {\beta\gamma} {\sigma_{u}^2}   \sum\limits_{a=1}^{n} \lvert  x^{a}  \rvert } }{  \sum\limits_{\{ \tilde{\x} \in {\chi}^n \}} e^{   (x^{0}\1-  \tilde{\x})^{T}\tilde{\Q} (x^{0}\1-  \tilde{\x}) - \frac {\beta\gamma} {\sigma_{u}^2}   \sum\limits_{a=1}^{n} \lvert  x^{a}  \rvert } } } dF_{X^{0}}(x^{0})
\end{equation}
%%%%%%%%%%%%%%%%%%%%%%%%%%%%%%%%%%%%%%%%%
% RS proof
%%%%%%%%%%%%%%%%%%%%%%%%%%%%%%%%%%%%%%%%%

\section{Proof of propostion \ref{prop: The limiting energy for the LASSO estimator in RS} }

%and \ref{prop: The limiting energy for the Zero-Norm estimator in RS  }}
Taking the same line of taught as we do for $\Q$, we can assume a natural replicated variables for the symmetric correlation matrix $\tilde{\Q}$ and the 1RSB as follows:  
\begin{enumerate}
\item replica symmetry ansatz : 
\begin{equation}  \label{eq: RS for tilde Q }
\tilde{\Q} =\frac{\beta^2 f_{0}^2}{2}\1_{n\times n}-  \beta e_{0}\I_{n \times n}
\end{equation}
\item one replica symmetry breaking ansatz : 
\begin{equation} \label{eq: RSB for tilde Q }
\tilde{\Q}=\beta^2 f_{1}^2 \1_{ n\times n} + \beta^2 g_{1}^2 \I_{ \frac{n\beta}{\mu_{1}} \times \frac{n\beta}{\mu_{1}} } \otimes \1_{ \frac{\mu_{1}}{\beta} \times \frac{\mu_{1}}{\beta} }- \beta e_{1}\I_{n \times n}
\end{equation}
\end{enumerate}
The variables  $q_{0}$, $b_{0}$, $q_{1}$, $p_{1}$,$b_{1}$, $f_{0}$,$e_{0}$,$f_{1}$,$g_{1}$,$e_{1}$, and $\mu_{1}$ are called the macroscopic variables and they are all functions of n. They all can be calculated from the saddel point equations that we shortly will derive.
First let us try to prove propostion \ref{prop: The limiting energy for the LASSO estimator in RS}  using the ansatz in \eqref{eq:RS of Q} and \eqref{eq: RS for tilde Q }.  We do it using equations \eqref{eq:avarage energy penality}, \eqref{trace of Q and Q tilde} and \eqref{saddel point integration place} and we apply the saddelpoint integration rule. What matters most becomes the argument of the exponential in \eqref{saddel point integration place}. So we first find  $\mbox{Tr}(\tilde{\Q}\Q)$, $\mathcal{G}(\Q)$, $\log M  (\Q)$  and in addition we will find the macroscopic parametrs mentioned before since  our limiting energy penality expressions for the different estimators considered in this paper are calculated interms of the macroscopic variables. Hence using \eqref{eq:RS of Q} and \eqref{eq: RS for tilde Q } we get
\begin{equation}\label{trace of Q and Q tilde} 
\mbox{Tr}(\tilde{\Q}\Q)=n(q_{0}+\frac{b_{0}}{\beta})(\frac{\beta^2f_{0}^2}{2}-\beta e_{0})    + \frac{n(n-1)}{2}q_{0}\beta^2f_{0}^2
\end{equation}
and  using \eqref{sum of the exponentials} and \eqref{eq: RS for tilde Q } again we get
\begin{align}\label{sum of the exponentials1}
M_{i} (\tilde{\Q})
&= \sum\limits_{\{{x}_{i}^{a}\in \chi \}} e^{  \Bigl(  \sum\limits_{a,b} \tilde{Q}_{ab} (x_{i}^{0}  -  x_{i}^{b})( x_{i}^{0}  -  x_{i}^{a})\Bigr) -  \frac {\beta\gamma} {\sigma_{u}^2}   \sum\limits_{a=1}^{n} \lvert  x_{i}^{a}  \rvert   } \\ 
&=\sum\limits_{\{{x}_{i}^{a}\in \chi \}} e^{  \frac{\beta^2f_{0}^2}{2}   \Bigl( \sum\limits_{a=1}^{n}( x_{i}^{0}  -  x_{i}^{a})\Bigr) ^2- e_{0}\beta \sum\limits_{a=1}^{n} ( x_{i}^{0}  -  x_{i}^{a})^2  -  \frac {\beta\gamma} {\sigma_{u}^2}   \sum\limits_{a=1}^{n} \lvert  x_{i}^{a}  \rvert   }\\
&=\sum\limits_{\{{x}_{i}^{a}\in \chi \}}   \int_{\mathbb{R}} e^{ \beta \sum\limits_{a=1}^{n} f_{0}\Re\{(x^{0}  - {x}_{i}^{a}) z^{*}\} - e_{0}  ( x_{i}^{0}  -  x_{i}^{a})^2 - \frac {\gamma} {\sigma_{u}^2}   \lvert  x_{i}^{a}  \rvert  }Dz\\
&=  \int \Biggl( \sum\limits_{\{x\in \chi \}} e^{\beta f_{0} \Re\{(x^{0}  - {x}_{i}^{a}) z^{*}\}+  e_{0} \beta (x^{0}  -  x)^2 - \frac {\beta\gamma} {\sigma_{u}^2}   \lvert  x  \rvert } \Biggr)^n Dz.
\end{align} 
From (B.4) to (B.7) we apply completing the square on the exponential of the argument and  the Hubbard-Stratonovich transform,
\begin{equation} \label{eq:Hubbard-Stratonovich transform}
e^{|x|^2}=\int_{\mathbb C} e^{2\Re \{xz^*\} }Dz,  
\end{equation} 
where $Dz$ is Gaussian measure defined as before,  to linearize the exponential argument.  And we finally transformed the problem to a singele integral and a single summation problem. 
To evaluate $\mathcal{G}(\Q)$ we should first  find the eigenvalues of the matrix L(n). Under the RS ansatz the matrix L(n) has three types of eigenvalues:  $ \lambda_{1}=- (\sigma_{u}^2+n\sigma_{0}^2) ^{-1}(b_{0}+n\beta q_{0}), $
$\lambda_{2}=-(\sigma_{u}^2)^{-1} b_{0}$ and $	\lambda_{3}=0$, and the numbers of degeneracy  for each are 1,  n-1,  and  N-n,  respectively.Thus we get 
\begin{equation}\label{G(Q)}
\mathcal{G}(\Q)=\int_{0}^{\frac{(b_{0}+n\beta q_{0})} { \sigma_{u}^2+n\sigma_{0}^2} }R(-w)dw + (n-1)\int_{0}^{\frac{b_{0}}{\sigma_{u}^2}       }R(-w)dw
\end{equation}
The integral in \eqref{saddel point integration place} is dominated by the maximum argument of the exponential function. Therefore,  the derivative of
 \begin{equation}\label{eq:G(Q) and Tr Q Qtilde}
\mathcal{G}(\Q) + \mbox{Tr}(\tilde{\Q}\Q)
\end{equation}
with respect to $q_{0}$ and $b_{0}$   must vanish as $N\rightarrow \infty$. Plugging \eqref{trace of Q and Q tilde} and \eqref{G(Q)} into \eqref{eq:G(Q) and Tr Q Qtilde} and taking the partial derivatives we get  
\begin{equation}
\frac{\beta n }{\sigma_{u}^2+n\sigma_{0}^2} R\Bigl(\frac{-(b_{0}+n\beta q_{0})}  { (\sigma_{u}^2+n\sigma_{0}^2)} \Bigr) + \frac{n(n-1)}{2}\beta^2f_{0}^2 +n\beta (\frac{\beta f_{0}^2}{2}-e_{0})     =0
%\frac{\beta Nn }{(\sigma_{u}^2+n\sigma_{0}^2)} R\Bigl(\frac{N(f+n\beta q_{0}))}      { (\sigma_{u}^2+n\sigma_{0}^2)}\Bigr) +2n\tilde{m_{0}} =0
\end{equation}
\begin{equation}
  \frac{1}{\sigma_{u}^2+n\sigma_{0}^2}R\Bigl(\frac{-(b_{0}+n\beta q_{0})}  { (\sigma_{u}^2+n\sigma_{0}^2)} \Bigr) + \frac{1}{\sigma_{u}^2}(n-1)R\Bigl(\frac{-b_{0}}   {\sigma_{u}^2}\Bigr) +n(\frac{\beta f_{0}^2}{2} -e_{0})=0,
\end{equation}
respectively. After algebraic simplification and solving for $e_{0}$ and $f_{0}$ we get 
\begin{equation}\label{fixed point eq for macroscopic variable e-zero}
e_{0}=\frac{1 }{\sigma_{u}^2} R\Bigl(\frac{-b_{0}}  { \sigma_{u}^2} \Bigr) ,
\end{equation}
\begin{align}\label{fixed point eq for macroscopic variable f-zero}
f_{0}& =\sqrt{ \frac{2}{n\beta } \Biggl[ \frac{1}{\sigma_{u}^2}R\Bigl(\frac{-b_{0}}   {\sigma_{u}^2}\Bigr) - \frac{1 }{\sigma_{u}^2+n\sigma_{0}^2} R\Bigl(\frac{-(b_{0}+n\beta q_{0}) }  { (\sigma_{u}^2+n\sigma_{0}^2)} \Bigr)\Biggr]}.\end{align}
%\begin{equation}
%e_{0}=\frac{1 }{\sigma_{u}^2+n\sigma_{0}^2} R\Bigl(\frac{-(b_{0}+n\beta q_{0}) }  { (\sigma_{u}^2+n\sigma_{0}^2)} \Bigr) - \frac{(1+n)}{2}\beta f_{0}^2
%\end{equation}
%\begin{align}
%f_{0}=\sqrt{ \frac{1}{n\beta} \Biggl[\frac{1}{\sigma_{u}^2+n\sigma_{0}^2}R\Bigl(\frac{-(b_{0}+n\beta q_{0}) }  { (\sigma_{u}^2+n\sigma_{0}^2)} \Bigr) + \frac{1}{\sigma_{u}^2}(n-1)R\Bigl(\frac{-b_{0}}   {\sigma_{u}^2}\Bigr) +ne_{0}\Biggr]}.
%\end{align}
and with the limit for $n\rightarrow 0$ 
\begin{align}\label{equ: fixed point eq for macroscopic variable f-zero}
f_{0}\displaystyle_{\longrightarrow }^{n\rightarrow 0}\sqrt{ \frac{2}{\beta } \Biggl[ \frac{\sigma_{0}^2}{\sigma_{u}^4}R\Bigl(\frac{-b_{0}}   {\sigma_{u}^2}\Bigr) +\frac{\beta q_{0} \sigma_{u}^2+b_{0}\sigma_{0}^2}{\sigma_{u}^6} R'\Bigl(\frac{-b_{0} }  { \sigma_{u}^2} \Bigr)\Biggr]}.\end{align}
%\begin{equation}
%\begin{align}\label{fixed point eq for macroscopic variable f-zero}
%f_{0}& \displaystyle_{\longrightarrow }^{n\rightarrow 0}\sqrt{ \frac{2}{3\beta } \Biggl[\frac{2\sigma_{u}^2-\sigma_{0}^2}{\sigma_{u}^4}R\Bigl(\frac{-b_{0}}  { \sigma_{u}^2} \Bigr) + \frac{b_{0} \sigma_{0}^2-\beta}{\sigma_{u}^6}R'\Bigl(\frac{-b_{0}}   {\sigma_{u}^2}\Bigr) \Biggr]}.\end{align}
By  substituting \eqref{trace of Q and Q tilde} into \eqref{partial of log M and trace of Q times  Q tilde} and doing the partial derivative of 
%\begin{align}
%& \log M  (e_{0},f_{0}) -  \mbox{Tr}(\tilde{\Q}\Q)   \nonumber\\
%gonumber  &=\int \int \log   \sum\limits_{\{\x \in {\chi}^n \}} e^{   (x^{0} \1- \x)^{T}\tilde{\Q} (x^{0} \1- \x) -\frac {\beta\gamma} {\sigma_{u}^2}   \sum\limits_{a=1}^{n} \lvert  x_{i}^{a}  \rvert  } dF_{X}(\x)dF_{X^{0}}(x^{0})\\ &-   \Bigl(n(q_{0}+\frac{b_{0}}{\beta})(\frac{\beta^2f_{0}^2}{2}-\beta e_{0})    + \frac{n(n-1)}{2}q_{0}\beta^2f^2 \Bigr)\\
% \nonumber  &=\int \int\log  \int \Biggl( \sum\limits_{\{x\in \chi \}}  e^{\beta f_{0}\Re\{(x^{0}  - {x}_{i}^{a}) z^{*}\}+  e_{0} \beta (x^{0}  -  x)^2 - \frac {\beta\gamma} {\sigma_{u}^2}   \lvert  x  \rvert }  \Biggr)^n Dz dF_{X}(\rm{x})dF_{X^{0}}(x^{0})\\ &-     \Bigl(n(q_{0}+\frac{b_{0}}{\beta})(\frac{\beta^2f_{0}^2}{2}-\beta e_{0})    + \frac{n(n-1)}{2}q_{0}\beta^2f_{0}^2 \Bigr)
% \end{align}
\begin{align}
& \log M  (e_{0},f_{0}) -  \mbox{Tr}(\tilde{\Q}\Q)   \nonumber\\
\nonumber  &= \int \log   \sum\limits_{\{\tilde{\x} \in {\chi}^n \}} e^{   (x^{0} \1- \tilde{\x})^{T}\tilde{\Q} (x^{0} \1- \tilde{\x}) -\frac {\beta\gamma} {\sigma_{u}^2} \tilde{\x}  } dF_{X^{0}}(x^{0}) \\ &-   \Bigl(n(q_{0}+\frac{b_{0}}{\beta})(\frac{\beta^2f_{0}^2}{2}-\beta e_{0})    + \frac{n(n-1)}{2}q_{0}\beta^2f^2 \Bigr)\\
 \nonumber  &= \int\log  \int \Biggl( \sum\limits_{\{x\in \chi \}}  e^{\beta f_{0}\Re\{(x^{0}  - {x}_{i}^{a}) z^{*}\}+  e_{0} \beta (x^{0}  -  x)^2 - \frac {\beta\gamma} {\sigma_{u}^2}   \lvert  x  \rvert }  \Biggr)^n Dz dF_{X^{0}}(x^{0})\\ &-     \Bigl(n(q_{0}+\frac{b_{0}}{\beta})(\frac{\beta^2f_{0}^2}{2}-\beta e_{0})    + \frac{n(n-1)}{2}q_{0}\beta^2f_{0}^2 \Bigr),
 \end{align}
with respect to $e_{0}$  and $f_{0}$  and equating to zero we get,
\begin{align}\label{fixed point eq for macroscopic variable q-zero}
q_{0}&=-\frac{b_{0}}{\beta}+\int_{\mathbb{R}} \int_{\mathbb{R}}   \frac{ \sum\limits_{\{x\in \chi \}} (x^{0}  -  x)^2 \zeta     } {  \sum\limits_{\{x\in \chi \}}\zeta  } Dz dF_{X^{0}}(x^{0})
 \end{align}
\begin{align}\label{fixed point eq for macroscopic variable b-zero}
b_{0}&=-\beta nq_{0}+ \frac{1}{f_{0}}\int_{\mathbb{R}} \int_{\mathbb{R}}  \frac{  \sum\limits_{\{x\in \chi \}} \Re\{(x^{0}  - {x}_{i}^{a}) z^{*}\} \zeta } {\sum\limits_{\{x\in \chi \}}  \zeta  } Dz dF_{X^{0}}(x^{0})
 \end{align}
where
\begin{equation}
\zeta=   e^{\beta f_{0}\Re\{(x^{0}  - {x}_{i}^{a}) z^{*}\}+  e_{0} \beta (x^{0}  -  x)^2 - \frac {\beta\gamma} {\sigma_{u}^2}   |x| }.
\end{equation}
 So collecting the  macroscopic variables in \eqref{fixed point eq for macroscopic variable e-zero}, \eqref{fixed point eq for macroscopic variable f-zero}, \eqref{fixed point eq for macroscopic variable q-zero} and \eqref{fixed point eq for macroscopic variable b-zero} and sending $n \rightarrow 0$ we have 
\begin{align}
e_{0}&=\frac{1 }{\sigma_{u}^2} R\Bigl(\frac{b_{0}}  { \sigma_{u}^2} \Bigr) 
 \end{align}
\begin{align}\label{fp equ: f0}
 f_{0}& \displaystyle_{\longrightarrow }^{n\rightarrow 0}\sqrt{ \frac{2}{\beta } \Biggl[ \frac{\sigma_{0}^2}{\sigma_{u}^4}R\Bigl(\frac{-b_{0}}   {\sigma_{u}^2}\Bigr) +\frac{ \beta q_{0} \sigma_{u}^2+b_{0}\sigma_{0}^2}{\sigma_{u}^6} R'\Bigl(\frac{-b_{0} }  { \sigma_{u}^2} \Bigr)\Biggr]}
\end{align}
\begin{align}
q_{0}&=-\frac{b_{0}}{\beta}+\int_{\mathbb{R}} \int_{\mathbb{R}}   \frac{ \sum\limits_{\{x\in \chi \}} (x^{0}  -  x)^2 \zeta     } {  \sum\limits_{\{x\in \chi \}}\zeta  } Dz dF_{X^{0}}(x^{0}),
\end{align}
\begin{align}
b_{0}&\displaystyle_{\longrightarrow }^{n\rightarrow 0}  \frac{1}{f_{0}}\int_{\mathbb{R}} \int_{\mathbb{R}}  \frac{  \sum\limits_{\{x\in \chi \}} \Re\{(x^{0}  - x) z^{*}\} \zeta } {\sum\limits_{\{x\in \chi \}}  \zeta  } Dz dF_{X^{0}}(x^{0}).
\end{align}
And the fixed point equations \eqref{fp equ: f0},  \eqref{fixed point eq for macroscopic variable q-zero}  and \eqref{fixed point eq for macroscopic variable b-zero} further can be simplified via the saddle point integration rule in the limit $\beta \rightarrow \infty$ as 
%\begin{align}\label{fp : f0}
% f_{0}=\sqrt{ 2 \frac{ q_{0} }{\sigma_{u}^4} R'\Bigl(\frac{-b_{0} }  { \sigma_{u}^2} \Bigr)}
%\end{align}
%\begin{align}
%q_{0}&=\int_{\mathbb{R}} \int_{\mathbb{R}} \int_{\mathbb{R}}  \Bigl|  \arg\min_{x \in \chi} \Bigl| f_{0} \Re\{(x^{0}  - x) z^{*}\}+  e_{0} (x^{0}  -  x)^2 - \frac {\gamma} {\sigma_{u}^2}   |x|\Bigr|\Bigl|^2 Dz dF_{X}(\x)dF_{X^{0}}(x^{0}),\\
%b_{0}&= \Biggl(\sqrt{ 2 \frac{ q_{0} }{\sigma_{u}^4} R'\Bigl(\frac{-b_{0} }  { \sigma_{u}^2} \Bigr)}\Biggr)^{-1}\int_{\mathbb{R}} \int_{\mathbb{R}} \int_{\mathbb{R}}  \Bigl|  \displaystyle_{x \in \chi}^{\arg \min} \Bigl|f_{0}\Re\{(x^{0}  - {x}_{i}^{a}) z^{*}\}+  e_{0} (x^{0}  -  x)^2 - \frac {\gamma} {\sigma_{u}^2}   |x|\Bigr|\Bigl|^2 Dz dF_{X}(\x)dF_{X^{0}}(x^{0}).
%\end{align}
%\begin{align}
% f_{0}&=\sqrt{ 2 \frac{ q_{0} }{\sigma_{u}^4} R'\Bigl(\frac{-b_{0} }  { \sigma_{u}^2} \Bigr)}\\
%%q_{0}&=\int_{\mathbb{R}} \int_{\mathbb{R}} \int_{\mathbb{R}}  \Biggl|  \arg\min_{x \in \chi} \hspace{1mm}\Bigl| z x^{0}\sqrt{  \frac{ 2q_{0} R'\Bigl(\frac{-b_{0} }  { \sigma_{u}^2} \Bigr)}{\sigma_{u}^4 } } +2R^2\Bigl(\frac{-b_{0} }  { \sigma_{u}^2} \Bigr)x - \frac {\gamma} {\sigma_{u}^2}  \Bigr|\Biggl|^2 Dz dF_{X}(\x)dF_{X^{0}}(x^{0}),\\
%%b_{0}&=\frac{1}{f_{0}}  \int_{\mathbb{R}} \int_{\mathbb{R}} \int_{\mathbb{R}} \Re \Bigg\{\arg\min_{x \in \chi} \Bigl|  z x^{0}\sqrt{  \frac{ 2q_{0} R'\Bigl(\frac{-b_{0} }  { \sigma_{u}^2} \Bigr)}{\sigma_{u}^4 } } +2R^2\Bigl(\frac{-b_{0} }  { \sigma_{u}^2} \Bigr)x - \frac {\gamma} {\sigma_{u}^2} \Bigr|z^{*}\Bigg\} Dz dF_{X}(\x)dF_{X^{0}}(x^{0}).
%\end{align}
\begin{align}
 f_{0}&=\sqrt{ 2 \frac{ q_{0} }{\sigma_{u}^4} R'\Bigl(\frac{-b_{0} }  { \sigma_{u}^2} \Bigr)}\\
 q_{0}&=\int_{\mathbb{R}}  \int_{\mathbb{R}}  \Bigl| x^{0}- \arg\min_{x \in \chi} \hspace{1mm}\Bigl| -z f_{0}+2e_{0}(x^{0}-x) - \frac {\gamma} {\sigma_{u}^2}  \Bigr|\Bigl|^2 Dz dF_{X^{0}}(x^{0}),\\
b_{0}&=\frac{1}{f_{0}}  \int_{\mathbb{R}} \int_{\mathbb{R}}  \Re \Bigg\{x^{0}- \arg\min_{x \in \chi} \hspace{1mm}\Bigl| -z f_{0}+2e_{0}(x^{0}-x) - \frac {\gamma} {\sigma_{u}^2}  \Bigr|z^{*}\Bigg\} Dz dF_{X^{0}}(x^{0}).
\end{align}
\vspace{2mm}
Putting together the results above we have 
\begin{align}\label{the xi function}
 \Xi_{n}&=\mathcal{I}\{Q\}+\mathcal{L}-\mathcal{G}(\bf{Q})  \nonumber\\ 
&=-\mathcal{G}(\bf{Q})+ \log  M(\tilde \Q) -  \mbox{Tr}(\tilde{\Q}\Q) \nonumber\\
\nonumber &=-\int_{0}^{\frac{(b_{0}+n\beta q_{0})} { \sigma_{u}^2+n\sigma_{0}^2} }R(-w)dw - (n-1)\int_{0}^{\frac{b_{0}}{\sigma_{u}^2} }R(-w)dw \\ & +  \log M  (e_{0},f_{0}) - \Bigl(n(q_{0}+\frac{b_{0}}{\beta})(\frac{\beta^2f_{0}^2}{2}-\beta e_{0})    + \frac{n(n-1)}{2}q_{0}\beta^2f_{0}^2\Bigr),
 \end{align}
and  the average free energy becomes
 \begin{align}
\beta\bar{\mathcal{F}}&=-\underset{n \rightarrow 0}{\operatorname{lim }}    \frac{\partial}{ \partial n} \underbrace{  {\rm   \underset{N \rightarrow  \infty}{\operatorname{lim }} \frac{1}{ N}  \log  \underset{\bf{n}, \bf{J} }{\operatorname{E }}\{ ( \mathcal{Z} )^{n}\} } }_{\rm  \Xi_{n}} \\ 
\nonumber &=\underset{n \rightarrow 0}{\operatorname{lim }}     \frac{\partial}{ \partial n} \Biggl\{  \int_{0}^{\frac{(b_{0}+n\beta q_{0})} { \sigma_{u}^2+n\sigma_{0}^2} }R(-w)dw + (n-1)\int_{0}^{\frac{b_{0}}{\sigma_{u}^2}       }R(-w)dw \\ 
& - \log M  (e_{0},f_{0}) + \bigl( n(q_{0}+\frac{b_{0}}{\beta})(\frac{\beta^2f_{0}^2}{2}-\beta e_{0})    + \frac{n(n-1)}{2}q_{0}\beta^2f_{0}^2   \bigr) \Biggr\} \\
\nonumber &=\underset{n \rightarrow 0}{\operatorname{lim }} \Biggl\{ \Bigl[ \frac{-(b_{0}+n\beta q_{0})} { \sigma_{u}^2+n\sigma_{0}^2} \Bigr] R\Bigl( \frac{-(b_{0}+n\beta q_{0})} { \sigma_{u}^2+n\sigma_{0}^2} \Bigr)  \\ 
\nonumber & + \frac{-(b_{0}+n\beta q_{0})} { (\sigma_{u}^2+n\sigma_{0}^2)}\Bigl[ -\frac{\Bigl(\beta q_{0}(\sigma_{u}^2+n\sigma_{0}^2 ) -(b_{0}+n\beta q_{0}) \sigma_{0}^2\Bigr) }{(\sigma_{u}^2+n\sigma_{0}^2)^2} \Bigr]R'\Bigl( \frac{-(b_{0}+n\beta q_{0})} { (\sigma_{u}^2+n\sigma_{0}^2)} \Bigr) \\ &+\int_{0}^{\frac{b_{0}}{\sigma_{u}^2} }R(-w)dw  -    \int_{\mathbb{R}}   \int_{\mathbb{R}}   \frac{ \zeta^n \ln \zeta }{ \zeta^n }  Dz dF_{X^{0}}(x^{0}) \Bigg\}\\
\nonumber &= \frac{-b_{0}} { \sigma_{u}^2}  R\Bigl( \frac{-b_{0}} { \sigma_{u}^2} \Bigr)+  \frac{b_{0}\bigl( \beta q_{0}\sigma_{u}^2 -b_{0}\sigma_{0}^2   \bigr)  }{\sigma_{u}^6}  R'\Bigl( \frac{-b_{0}} { \sigma_{u}^2} \Bigr) \\ &+\int_{0}^{\frac{b_{0}}{\sigma_{u}^2} }R(-w)dw  - \int_{\mathbb{R}} \int_{\mathbb{R}}   \ln \zeta  Dz dF_{X^{0}}(x^{0}).
\end{align}
Coming back to the main goal, the solution for the main unconstrained optimization  problem \eqref{unconstrained optimization problem} is given by the extremum  of \eqref{Hamiltonian  the present system}, it is calculated through the free energy by sending $\beta \rightarrow \infty$  as follows 
 \begin{align} \label{the limiting enery interms of the macroscopic variables}
\mathcal{ \bar{E}_{\mbox{rs}}^{\mbox{lasso}} }&= -\underset{\beta \rightarrow \infty }{\operatorname{lim }} \frac{1}{ \beta} \underset{n \rightarrow 0}{\operatorname{lim }}  \frac{\partial}{ \partial n}\Xi_{n}\\
\nonumber&=\underset{\beta \rightarrow \infty }{\operatorname{lim }} \frac{1}{\beta}\Biggl\{  \frac{-b_{0}} { \sigma_{u}^2}  R\Bigl( \frac{-b_{0}} { \sigma_{u}^2} \Bigr)+  \frac{b_{0}\bigl( \beta q_{0}\sigma_{u}^2 -b_{0}\sigma_{0}^2   \bigr)  }{\sigma_{u}^6}  R'\Bigl( \frac{-b_{0}} { \sigma_{u}^2} \Bigr) +\int_{0}^{\frac{b_{0}}{\sigma_{u}^2} }R(-w)dw \\ &  \hspace{10mm} - \int_{\mathbb{R}} \int_{\mathbb{R}}    \ln \zeta  Dz dF_{X^{0}}(x^{0})  \Biggr\}\\
&=\underset{\beta \rightarrow \infty }{\operatorname{lim }}  R\Bigl( \frac{-b_{0}} { \sigma_{u}^2} \Bigr) \Bigl(  \frac{ q_{0}  }{\sigma_{u}^2} +\frac{b_{0}} {\beta \sigma_{u}^2}     \Bigr)   + \frac{b_{0}  q_{0}  }{\sigma_{u}^4}  R'\Bigl( \frac{-b_{0}} { \sigma_{u}^2} \Bigr) \\&  \hspace{10mm} -   \underset{\beta \rightarrow \infty }{\operatorname{lim }}  \frac{1}{\beta}\Biggl\{   \int_{\mathbb{R}} \int_{\mathbb{R}}    \ln \zeta  Dz dF_{X^{0}}(x^{0}) \Biggr\}\\
&=\frac{ q_{0}  }{\sigma_{u}^2}R\Bigl( \frac{-b_{0}} { \sigma_{u}^2} \Bigr) - \frac{b_{0}  q_{0}  }{\sigma_{u}^4}  R'\Bigl( \frac{-b_{0}} { \sigma_{u}^2} \Bigr) .
 \end{align} 
This proves propostion \ref{prop: The limiting energy for the LASSO estimator in RS}.  And to prove propostion \ref{prop: The limiting energy for the Zero-Norm estimator in RS  } what we need is to use the zero norm regularizing term instead of the L1 norm, i.e. using  $ f({\x}^{a})=\| {\x}^{a} \|_{0} =\frac{k}{N} $   in \eqref{eq: Fourier transform of dirac},   and the result will be as in  \eqref{eq: The limiting energy for the Zero-Norm estimator in RS} which differ from  \eqref{eq: The limiting energy for the LASSO estimator in RS}  through the calculation of the macroscopic varables which depend on the distributions of the components of $\x$.

%%%%%%%%%%%%%%%%%%%%%%%%%%%%%%%%%%%%%%%%%
% RSB Proof
%%%%%%%%%%%%%%%%%%%%%%%%%%%%%%%%%%%%%%%%

\section{Proof of propostion \ref{prop: The limiting energy for the LASSO estimator in RSB  }  and  \ref{prop: Zero-Norm regularizing estimator with 1RSB ansatz} }

Turning to LASSO estimator with RSB ansatz we first use \eqref{eq:RSB of Q} and  \eqref{eq: RSB for tilde Q } to get 
\begin{align}
 \mbox{Tr}(\tilde{\Q}\Q)&= n(q_{1}  +p_{1} +\frac{b_{1}}{\beta})(\beta^2f_{1}^2+\beta^2g_{1}^2-\beta e_{1}) \\&+ n(\frac{ \mu_{1} }{\beta}-1)(q_{1}  +p_{1} )(\beta^2g_{1}^2+\beta^2f_{1}^2)  +n(n-\frac{ \mu_{1} }{\beta})q_{1}\beta^2 f_{1}^2.
\end{align}
To evaluate $\mathcal{G}(q_{1} , p_{1}, f_{1}, \mu_{1} )$ we should first  find the eigenvalues of the matrix $\L(n)$. Under the RSB ansatz the matrix $\L(n)$ has four types of eigenvalues:  $ \lambda_{1}=- (\sigma_{u}^2+n\sigma_{0}^2) ^{-1}(b_{1}+\mu p_{1} + \beta n q_{1}), $ $\lambda_{2}=-(\sigma_{u}^2)^{-1} (b_{1}+\mu p_{1})$, $\lambda_{3}=-(\sigma_{u}^2)^{-1} b_{1}$ and $\lambda_{4}=0$, and the numbers of degeneracy  for each are 1,  $n\beta/\mu-1$, $n-n\beta/\mu$,  and  $N-n$,  respectively. Hence
\begin{align}\label{eq: the g function}
\nonumber \mathcal{G}(q_{1} , p_{1}, f_{1}, \mu_{1} )&=\int_{0}^{\frac{b_{1}+\mu_{1} p_{1} + \beta n q_{1}} { \sigma_{u}^2+n\sigma_{0}^2} }R(-w)dw + (\frac{n \beta}{\mu_{1}}-1)\int_{0}^{\frac{b_{1}+\mu_{1} p_{1}}{\sigma_{u}^2}  }R(-w)dw \\ & +  (n-\frac{n \beta}{\mu_{1}})\int_{0}^{\frac{b_{1}}{\sigma_{u}^2}  }R(-w)dw
\end{align}
Further with entries of $\tilde{\Q}$ being RSB ansatz  \eqref{eq:applying law of large numbers} will have more involved terms than the RS ansatzs. i.e. ,
\begin{align}\label{eq:logarithm of M1}
&\log M (q_{1} , p_{1}, f_{1}, \mu_{1} ) \nonumber
\\&= \int \log   \sum\limits_{\{\tilde{\x} \in {\chi}^n \}} e^{   (x^{0}\1- \tilde{\x})^{T}\tilde{\Q} (x^{0}\1- \tilde{\x}) -\frac {\beta\gamma} {\sigma_{u}^2}   \tilde{\x} } dF_{X^{0}}(x^{0}) \nonumber\\ 
%&=\int \int \log   \sum\limits_{\{\x \in {\chi}^n \}} e^{   (x^{0}\1- \x)^{T}\tilde{\Q} (x^{0}\1- \x) -\frac {\beta\gamma} {\sigma_{u}^2}   \sum\limits_{j=1}^{n} |x_j|  } dF_{X}(\x)dF_{X^{0}}(x^{0}) \nonumber\\ 
&= \int  \log   \sum\limits_{\{\x \in {\chi}^n \}}  e^{\beta^2f_{1}^2 \Bigl|\sum\limits_{a=1}^{n} (x^{0}- x_{a})  \Bigr|^2+ \beta^2g_{1}^2\sum\limits_{l=0}^{\frac{n \beta}{\mu}-1} \Bigl| \sum\limits_{a=1}^{\frac{\mu}{\beta}}    (x^{0}- x_{a+\frac{l \mu_{1}}{\beta}})\Bigr|^2 - \beta e_{1}   \sum\limits_{a=1}^{n} (x^{0}- x_{a})^2 -\frac {\beta\gamma} {\sigma_{u}^2}   \sum\limits_{a=1}^{n} \lvert  x_{i}^{a}  \rvert   }  \nonumber\\& \hspace{50mm}  \cdot dF_{x^{0}}(x^{0}).
\end{align}
Using the Hubbard-Stratonovich transform \eqref{eq:Hubbard-Stratonovich transform} we can express \eqref{eq:logarithm of M1}  as in  (c.f. [\cite{RGM} , (66)- (70)]  ) as follows 
\begin{align}
&\log M (q_{1} , p_{1}, f_{1}, \mu_{1} ) \nonumber
\\  &= \int \log   \sum\limits_{\{\x \in {\chi}^n \}}  \int _{\mathbb{C}}e^{\sum\limits_{a=1}^{n} \bigl[ 2\beta f_{1}\Re\{(x^{0}- x_{a}) z^{*} \}    - \beta e_{1}   | (x^{0}- x_{a})|^2 -\frac {\beta\gamma} {\sigma_{u}^2}   |x_{i}^{a} | \bigr]  + \beta^2g_{1}^2\sum\limits_{l=0}^{\frac{n \beta}{\mu}-1}\bigl| \sum\limits_{a=1}^{\frac{\mu}{\beta}} (x^{0}- x_{a+\frac{l \mu_{1}}{\beta}})\bigr|^2   }  \nonumber\\& \hspace{50mm}  \cdot Dz dF_{X^{0}}(x^{0})\nonumber\\
 &=   \int \log  \int_{\mathbb C} \Biggl[ \int_{\mathbb C}   \Biggl(  \sum\limits_{\{\x \in \chi \}} \mbox{$\cal{K}$} \mbox{( $x$, $y$, $z$)}  \Biggr) ^{\frac{\mu_{1}}{\beta}}  \mbox{$Dy$} \Biggr]^{ \frac{ n \beta}{ \mu_{1} } }  \mbox{$ Dz dF_{X^{0}}(x^{0}) $}
\end{align}
where
\begin{equation}
\mbox{$\cal{K}$}\mbox{( $x$, $y$, $z$)} = \mbox{$e^{ 2\beta \Re\{(x^{0}- x) (f_{1}z^{*}+g_{1}y^{*}) \}    - \beta e_{1}   | (x^{0}- x)|^2 -\frac {\beta\gamma} {\sigma_{u}^2}   |x| }$}.  
\end{equation}  
Due to  \eqref{partial of G(Q) and trace of Q times  Q tilde} the partial dervative of 
 \begin{equation} \label{the saddel point integration calculations}
\mathcal{G}(q_{1} , p_{1}, f_{1}, \mu_{1} )+ \mbox{Tr}(\tilde{\Q}\Q)
\end{equation}
with respect to the macroscopic variables $q_{1}$, $p_{1}$, and $b_{1}$ vanishes as $N \rightarrow \infty$ by definition of the sadel point approximation. And pluging \eqref{eq: the g function}  and \eqref{eq:logarithm of M} in \eqref{the saddel point integration calculations} and calculating the partial derivatives and seting them to zero and after some algebraic manipulation we get the folowing set of equations
\begin{align}
0&= n^2\beta^2f_{1}^2+ n\beta \mu_{1} g_{1}^2-n\beta e_{1} + \frac {n\beta}{\sigma_{u}^2+n\sigma_{0}^2} R(\frac{-b_{1}-\mu_{1}p_{1} - \beta n q_{1}} { \sigma_{u}^2+n\sigma_{0}^2} )\\
\nonumber 0&=n\beta \mu_{1} b_{1}^2+n\beta \mu_{1} g_{1}^2-n\beta e_{1}+\frac {(n\beta-\mu_{1})}{\sigma_{u}^2} R(\frac{-b_{1}-\mu_{1} p_{1} } { \sigma_{u}^2} ) \\&+ \frac {\mu_{1}}{\sigma_{u}^2+n\sigma_{0}^2} R(\frac{-b_{1}-\mu_{1}p_{1} - \beta n q_{1}} { \sigma_{u}^2+n\sigma_{0}^2} )\\ 
0&=n\beta f_{1}^2+n\beta g_{1}^2-ne_{1} +\frac {(n-\frac{n\beta}{\mu_{1}})}{\sigma_{u}^2}     R(\frac{-b_{1}} { \sigma_{u}^2} ) + \frac {(\frac{n\beta}{\mu_{1}}-1)}{\sigma_{u}^2} R(\frac{-b_{1}-\mu_{1} p_{1} } { \sigma_{u}^2} ) \\&+ \frac {1}{\sigma_{u}^2+n\sigma_{0}^2} R(\frac{-b_{1}-\mu_{1}p_{1} - \beta n q_{1}} { \sigma_{u}^2+n\sigma_{0}^2} ).
\end{align}
Solving for $e_{1}$, $g_{1}$, $f_{1}$ we get 
\begin{align}
e_{1}&=\frac{1}{\sigma_{u}^2}R(\frac{-b_{1} } { \sigma_{u}^2} ),\\
g_{1}&=\sqrt{\frac{1}{\mu_{1}} \Biggl[\frac{1}{\sigma_{u}^2}R(\frac{-b_{1} } { \sigma_{u}^2} )-\frac{1}{\sigma_{u}^2}R(\frac{-b_{1}-\mu_{1}p_{1} } { \sigma_{u}^2} )\Biggr]},\\
f_{1}&=\sqrt{\frac{1}{n\beta} \Biggl[\frac{1}{\sigma_{u}^2}R(\frac{-b_{1}-\mu_{1} p_{1} } { \sigma_{u}^2} )-\frac{1}{\sigma_{u}^2+n\sigma_{0}^2}R(\frac{-b_{1}-\mu_{1}p_{1} -n\beta q_{1} } { \sigma_{u}^2+n\sigma_{0}^2} )\Biggr]},
\end{align}
and further with the limits $n \rightarrow 0$
\begin{align}
f_{1}&\displaystyle_{\longrightarrow }^{n\rightarrow 0}\sqrt{\frac{1}{\beta} \Biggl[\frac{\sigma_{0}^2}{\sigma_{u}^4}R(\frac{-b_{1}-\mu_{1} p_{1} } { \sigma_{u}^2} )+\frac{(\sigma_{u}^2\beta q_{1}+\sigma_{0}^2(b_{1}+\mu_{1}p_{1}))}{\sigma_{u}^6}R'(\frac{-b_{1}-\mu_{1}p_{1} } { \sigma_{u}^2} )\Biggr]}.
\end{align}
and as $\beta \rightarrow \infty $ we can simplify it further as
\begin{align}
f_{1}&\displaystyle_{\longrightarrow }^{n\rightarrow 0}\sqrt{ \frac{ q_{1}}{\sigma_{u}^4}R'(\frac{-b_{1}-\mu_{1}p_{1} } { \sigma_{u}^2} )}.
\end{align} 
Also due to \eqref{partial of log M and trace of Q times  Q tilde} the partial derivatives of  $$\log M (q_{1} , p_{1}, f_{1}, \mu_{1} ) - \mbox{Tr}(\tilde{\Q}\Q)$$ with respect to $f_{1}$, $g_{1}$, and $e_{1}$, must also vanish as $N \rightarrow \infty $. This produces the following set of equations while taking $n  \rightarrow 0 $.
\begin{align}\label{eq: app: 1RSB: Equation after partial derivative wrt f}
b_1 + p_1\mu_1 &=  \frac{1}{f_1} \int  \int_{\mathbb{C}^2}  \frac{ \left(  \sum_{x \in \chi}  \mbox{$\cal{K}$} \mbox{( $x$, $y$, $z$)}  \right)^{\frac{\mu_1}{\beta}-1} }{\int_{\mathbb{C}} \left(  \sum_{x \in \chi}   \mbox{$\cal{K}$} \mbox{( $x$, $y$, $z$)} 
\right)^{\frac{\mu_1}{\beta}} \mbox{$D \tilde{y} $} }
\nonumber\\& \hspace{35mm}  \cdot \sum_{x \in \chi}  \Re{ \{xz^*\}}  \mbox{$\cal{K}$} \mbox{( $x$, $y$, $z$)}  \mbox{$D y  Dz dF_{X^{0}}(x^{0}) $}
\end{align}
\begin{align}\label{eq: app: 1RSB: Equation after partial derivative wrt f}
b_1 +(q_1 +p_1)\mu_1 &=  \frac{1}{g_1} \int  \int_{\mathbb{C}^2}  \frac{ \left(  \sum_{x \in \chi}  \mbox{$\cal{K}$} \mbox{( $x$, $y$, $z$)}  \right)^{\frac{\mu_1}{\beta}-1} }{\int_{\mathbb{C}} \left(  \sum_{x \in \chi}   \mbox{$\cal{K}$} \mbox{( $x$, $y$, $z$)} 
\right)^{\frac{\mu_1}{\beta}} \mbox{$D \tilde{y} $} }
\nonumber\\& \hspace{23mm}  \cdot \sum_{x \in \chi}  \Re{ \{xy^*\}}  \mbox{$\cal{K}$} \mbox{( $x$, $y$, $z$)} \mbox{$  Dz dF_{X^{0}}(x^{0}) $}
\end{align}
\begin{align}\label{eq: app: 1RSB: Equation after partial derivative wrt f}
q_1 +p_1&= -\frac{b_1}{\beta}+ \frac{1}{g_1} \int  \int_{\mathbb{C}^2}  \frac{ \left(  \sum_{x \in \chi}  \mbox{$\cal{K}$} \mbox{( $x$, $y$, $z$)}  \right)^{\frac{\mu_1}{\beta}-1} }{\int_{\mathbb{C}} \left(  \sum_{x \in \chi}   \mbox{$\cal{K}$} \mbox{( $x$, $y$, $z$)} 
\right)^{\frac{\mu_1}{\beta}} \mbox{$D \tilde{y} $} }
\nonumber\\& \hspace{23mm}  \cdot \sum_{x \in \chi}  |x|^2  \mbox{$\cal{K}$} \mbox{( $x$, $y$, $z$)}  \mbox{$D y  Dz  dF_{X^{0}}(x^{0}) $}.
\end{align}
In addition when we take the partial derivative of 
 \begin{equation} \label{the saddel point integration calculations-mu1}
\mathcal{G}(q_{1} , p_{1}, f_{1}, \mu_{1} )+ \mbox{Tr}(\tilde{\Q}\Q)-\log M (q_{1} , p_{1}, f_{1}, \mu_{1} )
\end{equation}
with respect of  $\mu_{1}$  is vanishes and yields at the limit as $n \rightarrow 0$
\begin{align}\label{the saddel point integration calculations-mu2}
0&= \frac{1}{\mu_{1}^2}\int_{\frac{b_{1}}{\sigma_{u}^2}}^{\frac{b_{1}+\mu_{1} p_{1}}{\sigma_{u}^2}  }R(-w)dw    + \frac{ p_{1}}{\mu_{1}^2}  R\bigl( -\frac{b_{1}+\mu_{1} p_{1}}{\sigma_{u}^2}\bigr)+(q_{1}  +p_{1} ) g_{1}^2+p_{1}  f_{1}^2  \nonumber\\
& +\int \int_{\mathbb{C}} \Biggl[\frac{1}{\mu_{1}^2}\log   \Bigl( \int_{\mathbb C}   \Bigl(  \sum\limits_{\{\x \in \chi \}} \mbox{$\cal{K}$} \mbox{( $x$, $y$, $z$)}  \Bigr) ^{\frac{\mu_{1}}{\beta}}  \mbox{$Dy$} \Bigr) \nonumber\\
& - \int_{\mathbb{C}}  \frac{ \left(  \sum_{x \in \chi}  \mbox{$\cal{K}$} \mbox{( $x$, $y$, $z$)}  \right)^{\frac{\mu_1}{\beta}} }{\beta\mu_{1}^2\int_{\mathbb{C}} \left(  \sum_{x \in \chi}   \mbox{$\cal{K}$} \mbox{( $x$, $y$, $z$)} 
\right)^{\frac{\mu_1}{\beta}} \mbox{$D \tilde{y} $} } \cdot \log\Bigl( \sum_{x \in \chi}  \mbox{$\cal{K}$} \mbox{( $x$, $y$, $z$)} \Bigr) \mbox{$Dy$}\Biggr] \nonumber\\
& \hspace{60mm}  \cdot \mbox{$  Dz  dF_{X^{0}}(x^{0}) $}
\end{align}
So as $\beta  \rightarrow \infty$ these fixed point equations can be simplified as follows: 
\begin{align}\label{eq: app: 1RSB: Equation after partial derivative wrt f}
b_1 + p_1\mu_1 &=  \frac{1}{f_1} \int \int_{\mathbb{C}^2} \Re \Big\{ \Bigl( x^{0} - \Psi_{2} \Bigr)z^* \Big\} \mbox{$D y  Dz  dF_{X^{0}}(x^{0}) $}
\end{align}
\begin{align}\label{eq: app: 1RSB: Equation after partial derivative wrt f}
b_1 +(q_1 +p_1)\mu_1 &=  \frac{1}{g_1} \int  \int_{\mathbb{C}^2} \Re \Big\{ \Bigl( x^{0} - \Psi_{2}\Bigr)y^* \Big\}   \mbox{$D y  Dz  dF_{X^{0}}(x^{0}) $}
\end{align}
\begin{align}\label{eq: app: 1RSB: Equation after partial derivative wrt f}
q_1 +p_1&=  \frac{1}{g_1} \int  \int_{\mathbb{C}^2}  | \Psi_{2} |^2 \mbox{$D y  Dz  dF_{X^{0}}(x^{0}) $}
\end{align}
where $$\Psi_{2}=\arg\min_{x \in \chi} \hspace{1mm}\Bigl| 2 \Re\{(x^{0}- x) (f_{1}z^{*}+g_{1}y^{*}) \}    -  e_{1}   | (x^{0}- x)|^2 -\frac {\gamma} {\sigma_{u}^2}   |x| \Bigr|$$
Puting together the results again as in \eqref{the xi function} and doing again the steps (B.34) to (B.38) for the RSB case 
 \begin{align}
\mathcal{ \bar{E}_{\mbox{1rsb}}^{\mbox{lasso}} }&= -\underset{\beta \rightarrow \infty }{\operatorname{lim }} \frac{1}{ \beta} \underset{n \rightarrow 0}{\operatorname{lim }}  \frac{\partial}{ \partial n}\Xi_{n}\\
&=-\underset{\beta \rightarrow \infty }{\operatorname{lim }} \frac{1}{ \beta} \underset{n \rightarrow 0}{\operatorname{lim }}  \frac{\partial}{ \partial n}\{-\mathcal{G}(\Q) -\mbox{Tr}(\tilde{\Q}\Q) + \log  M(\tilde \Q) \}  \\
&=\underset{\beta \rightarrow \infty }{\operatorname{lim }} \frac{1}{ \beta} \underset{n \rightarrow 0}{\operatorname{lim }}  \frac{\partial}{ \partial n} \Biggl\{\int_{0}^{\frac{b_{1}+\mu_{1} p_{1} + \beta n q_{1}} { \sigma_{u}^2+n\sigma_{0}^2} }R(-w)dw + (\frac{n \beta}{\mu_{1}}-1)\int_{0}^{\frac{b_{1}+\mu_{1} p_{1}}{\sigma_{u}^2}  }R(-w)dw \nonumber\\
 &  +  (n-\frac{n \beta}{\mu_{1}})\int_{0}^{\frac{b_{1}}{\sigma_{u}^2}  }R(-w)dw  +\Bigl[ n(q_{1}  +p_{1} +\frac{b_{1}}{\beta})(\beta^2f_{1}^2+\beta^2g_{1}^2-\beta e_{1}) \nonumber \\
&+ n(\frac{ \mu_{1} }{\beta}-1)(q_{1}  +p_{1} )(\beta^2g_{1}^2+\beta^2f_{1}^2)  +n(n-\frac{ \mu_{1} }{\beta})q_{1}\beta^2 f_{1}^2  \Bigr] \nonumber\\ 
&-\log M (q_{1} , p_{1}, f_{1}, \mu_{1} )\Biggr\}\\
&=\underset{\beta \rightarrow \infty }{\operatorname{lim }} \frac{1}{ \beta} \Biggl\{   (\frac{b_{1}+\mu_{1} p_{1}  } { \sigma_{u}^2}) R(\frac{-b_{1}-\mu_{1} p_{1}  } { \sigma_{u}^2}) \\
 &+  (\frac{b_{1}+\mu_{1} p_{1}  } { \sigma_{u}^2})\frac{ (\beta q_{1} \sigma_{u}^2 -(b_{1}+\mu_{1} p_{1} )\sigma_{0}^2)} { \sigma_{u}^4}  R'(\frac{-b_{1}-\mu_{1} p_{1}  } { \sigma_{u}^2})  \nonumber  \\
&+ \frac{\beta}{\mu_{1}}\int_{0}^{\frac{b_{1}+\mu_{1} p_{1}}{\sigma_{u}^2}  }R(-w)dw +  (1-\frac{ \beta}{\mu_{1}})\int_{0}^{\frac{b_{1}}{\sigma_{u}^2}  }R(-w)dw  \nonumber\\
&+\Bigl[ b_{1}(\beta f_{1}^2+\beta g_{1}^2- e_{1}) + \mu_{1} (q_{1}  +p_{1} )(\beta f_{1}^2+\beta g_{1}^2-\frac{\beta}{\mu_{1}} e_{1}) - \mu_{1} q_{1}\beta f_{1}^2  \Bigr]   \nonumber\\
 &     - \frac{\beta}{\mu_{1}} \int \log  \int_{\mathbb C}  \int_{\mathbb C}   \Biggl(  \sum\limits_{\{\x \in \chi \}} \mbox{$\cal{K}$} \mbox{( $x$, $y$, $z$)}  \Biggr) ^{\frac{\mu_{1}}{\beta}}  \mbox{$Dy$}   \mbox{$ Dz  dF_{X^{0}}(x^{0}) $}\Biggr \}  \\
&= \frac{  q_{1} } { \sigma_{u}^2} (\frac{b_{1}+\mu_{1} p_{1}  } { \sigma_{u}^2}) R(\frac{-b_{1}-\mu_{1} p_{1}  } { \sigma_{u}^2})   + \frac{1}{\mu_{1}}\int_{0}^{\frac{b_{1}+\mu_{1} p_{1}}{\sigma_{u}^2}  }R(-w)dw -  \frac{ 1}{\mu_{1}}\int_{0}^{\frac{b_{1}}{\sigma_{u}^2}  }R(-w)dw  \nonumber \nonumber \\
&+\Bigl[ (b_{1}+ \mu_{1} (q_{1}  +p_{1} ))(f_{1}^2+ g_{1}^2)- e_{1}(q_{1}  +p_{1} ) - \mu_{1} q_{1} f_{1}^2  \Bigr] \nonumber\\
&  -\underset{\beta \rightarrow \infty }{\operatorname{lim }} \frac{1}{ \beta} \Biggl\{\frac{\beta}{\mu_{1}}\int  \log  \int_{\mathbb C}  \int_{\mathbb C}   \Biggl(  \sum\limits_{\{\x \in \chi \}} \mbox{$\cal{K}$} \mbox{( $x$, $y$, $z$)}  \Biggr) ^{\frac{\mu_{1}}{\beta}}  \mbox{$Dy$}   \mbox{$ Dz  dF_{X^{0}}(x^{0}) $} \Biggr \} \\
% \frac{f_{1}+\mu p_{1}  } { \sigma_{u}^2} R(\frac{-f_{1}-\mu p_{1}  } { \sigma_{u}^2})+ \frac{\beta  q_{1} } { \sigma_{u}^2} R'(\frac{-f_{1}-\mu p_{1}  } { \sigma_{u}^2})
%
%&=\underset{\beta \rightarrow \infty }{\operatorname{lim }} \frac{1}{ \beta} \Biggl\{+  - \frac{(f_{1}+\mu p_{1} )\sigma_{0}^2} { \sigma_{u}^4}  R'(\frac{-f_{1}-\mu p_{1}  } { \sigma_{u}^2})   + \beta n q_{1}} { \sigma_{u}^2+n\sigma_{0}^2} }R(-w)dw + \frac{\beta}{\mu_{1}}\int_{0}^{\frac{b_{1}+\mu_{1} p_{1}}{\sigma_{u}^2}  }R(-w)dw 
% &  +  (1-\frac{ \beta}{\mu_{1}})\int_{0}^{\frac{b_{1}}{\sigma_{u}^2}  }R(-w)dw - \underset{n \rightarrow 0}{\operatorname{lim }}  \frac{\partial}{ \partial n}\Bigl\{\log M (q_{1} , p_{1}, f_{1}, \mu_{1} )\Bigr\} \nonumber\\&+ \Bigl[\beta^2 f_{1}^2 (b_{1}+\mu_{1} p_{1} )   + \beta^2 g_{1}^2( b_{1}+\mu_{1}(q_{1}  +p_{1} )-\beta e_{1}(q_{1}  +p_{1} +\frac{b_{1}}{\beta})   \Bigr] \Biggr\}   \\
&= \frac{  1 } { \sigma_{u}^2} (q_{1} + p_{1} +\frac{b_{1}}{\mu_{1}})R(\frac{-b_{1}-\mu_{1} p_{1}  } { \sigma_{u}^2})  -\frac{b_{1}}{\mu_{1}\sigma_{u}^2}R(-\frac{b_{1}} {\sigma_{u}^2}) \nonumber  \\ &    \hspace{10mm}+   q_{1} (\frac{b_{1}+\mu_{1} p_{1}  } { \sigma_{u}^2} )  R'(\frac{-b_{1}-\mu_{1} p_{1}  } { \sigma_{u}^2})\\
&=\scriptstyle{\frac{  1 } { \sigma_{u}^2} (q_{1} + p_{1} +\frac{b_{1}}{\mu_{1}})R(\frac{-b_{1}-\mu_{1} p_{1}  } { \sigma_{u}^2})  -\frac{b_{1}}{\mu_{1}\sigma_{u}^2}R(-\frac{b_{1}} {\sigma_{u}^2}) }  \nonumber\\ &  \scriptstyle{ +   q_{1} (\frac{b_{1}+\mu_{1} p_{1}  } { \sigma_{u}^2} )  R'(\frac{-b_{1}-\mu_{1} p_{1}  } { \sigma_{u}^2})     }  
\end{align}
%\section{Proof of the example  }
%\subsection{Case I: $\alpha=1 $}
%For $\alpha=1$, the R-transform of $J=\A^T\A$ satisfies
%\begin{equation}\label{def: definition of R-transform}
%R(z)= \frac{1}{1- z}
%\end{equation}
%and its derivative with respect to z becomes 
%\begin{equation}\label{def: definition of R-transform}
%R'(z)= \frac{1}{(1- z)^2},
%\end{equation}
%hence the equations (D.12) (D.13) and  (D.16) becomes
%\begin{align}
%e_{1}&=\frac{1}{ \sigma_{u}^2+ b_{1}}  ,\\
%g_{1}&=\sqrt{     \frac{p_{1}}{ (\sigma_{u}^2+ b_{1}) (\sigma_{u}^2+ b_{1}+\mu_{1}p_{1})}   },\\
%f_{1}&\displaystyle_{\longrightarrow }^{n\rightarrow 0}\sqrt{   \frac{q_{1}}{ (\sigma_{u}^2+ b_{1}+\mu_{1}p_{1})^2} },
%\end{align}
%The limitting energy in (D.32)  simplifies to 
%\begin{equation} \label{eq:limitting energy for alpha=1}
%\mathcal{ \bar{E}_{\mbox{1rsb}}^{\mbox{lasso}} }=\frac{q_{1} + p_{1} +\frac{b_{1}}{\mu_{1}}}{\sigma_{u}^2+b_{1}+\mu_{1} p_{1}} - \frac{b_{1}}{\mu_{1}(1+b_{1})} + \frac{q_{1}(b_{1}+\mu_{1} p_{1} )}{\sigma_{u}^2+b_{1}+\mu_{1} p_{1}}
%\end{equation}

%%%%%%%%%%%%%%%%%%%%%%%%%%%%%%%%%%%%%%%%%%%%%%%%%%%%

\end{document}